  \DeclareMathOperator\CSP{CSP}
  \DeclareMathOperator\SCSP{SCSP}
  \DeclareMathOperator\Inv{Inv}
  \DeclareMathOperator\Image{Im}
  \DeclareMathOperator\Pol{Pol}
  \DeclareMathOperator\SurjHom{SurjHom}
\theoremstyle{definition}
\theoremstyle{plain}
\newtheorem{thm}{Theorem}
\newtheorem{problem}{Question}
\newtheorem{conj}{Conjecture}
\newtheorem{lem}[thm]{Lemma}
\begin{document}

\title{No-Rainbow Problem and the Surjective Constraint Satisfaction Problem}
\author{Dmitriy Zhuk} 

\maketitle

\begin{abstract}
The Surjective Constraint Satisfaction Problem (SCSP) is the problem 
of deciding whether there exists a surjective assignment to a set of variables
subject to some specified constraints, 
where a surjective assignment is an assignment containing all elements of the domain.
In this paper we show that 
the most famous SCSP, called No-Rainbow Problem, is NP-Hard. 
Additionally, we disprove the conjecture saying that 
the SCSP over a constraint language $\Gamma$ and 
the CSP over the same language with constants
have the same computational complexity up to poly-time reductions.
Our counter-example also shows
that the complexity of the SCSP cannot be described in terms of polymorphisms of the constraint language.
\end{abstract}

\section{Introduction}
\subsection{Constraint Satisfaction Problem}
The \emph{Constraint Satisfaction Problem (CSP)} is the problem of deciding whether there is an assignment to a set of variables
subject to some specified constraints. 
Formally, it can be defined in the following way.
Let $A$ be a finite set, 
$\Gamma$ be a set of relations on $A$, called 
a \emph{constraint language}.
Then \emph{the Constraint Satisfaction Problem over the constraint language} $\Gamma$, denoted by $\CSP(\Gamma)$, is
the following decision problem:
given a formula of the form 
$$R_{1}(z_{1,1},\dots,z_{1,n_1})\wedge
\dots
\wedge R_{s}(z_{s,1},\dots,z_{s,n_s}),
$$
where $R_{1},\dots,R_{s}\in\Gamma$ and each $z_{i,j}\in\{x_1,\dots,x_{n}\}$;
decide whether the formula is satisfiable.
It is well known that many combinatorial problems can be expressed as $\CSP(\Gamma)$
for some constraint language $\Gamma$.
Moreover, for some sets $\Gamma$ the corresponding decision problem can be solved in polynomial time;
while for others it is NP-complete.
It was conjectured in 1998 that
$\CSP(\Gamma)$ is either in P, or NP-complete \cite{FederVardi}.
In 2017 this conjecture was resolved independently 
by Andrei Bulatov \cite{BulatovProofCSP,bulatov2017dichotomy} and Dmitriy Zhuk \cite{ZhukFOCSCSPPaper,MyProofCSP}. Moreover, the classification of the complexity for different constraint languages turned out to be very simple and was given in terms of polymorphisms.
We say that a $k$-ary operation $f$ is a \emph{polymorphism} of
an $m$-ary relation $R$
if 
whenever $(x^1_1,\ldots,x^m_1),\ldots,(x^1_k,\ldots,x^m_k)$ in $R$, then also $(f(x^1_1,\ldots,x^1_k),\ldots,f(x^m_1,\ldots,x^m_k))$ in $R$.
In this case we also say that $f$ \emph{preserves} $R$.
An operation $f$ is a \emph{polymorphism} of $\Gamma$ if it is a polymorphism of every relation in $\Gamma$.
An operation $w$ is called a \emph{weak near-unanimity (WNU) operation} 
if it satisfies the following identities
$$
w(y,x,\dots,x)=
w(x,y,x,\dots,x)=\dots = w(x,\dots,x,y).
$$
For instance a majority operation, conjunction, disjunction are WNU operations.
Then the complexity of $\CSP(\Gamma)$ is described by the following theorem.

\begin{thm}[\cite{BulatovProofCSP,bulatov2017dichotomy,ZhukFOCSCSPPaper,MyProofCSP}]\label{FederVardiConj}
Suppose $\Gamma$ is a finite constraint language on a finite set $A$.
Then $\CSP(\Gamma)$ is solvable in polynomial time 
if there exists a WNU polymorphism of $\Gamma$;
$\CSP(\Gamma)$ is NP-complete otherwise.
\end{thm}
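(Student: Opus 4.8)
The plan is to split the biconditional into the hardness direction (no WNU polymorphism $\Rightarrow$ $\NP$-completeness) and the tractability direction (existence of a WNU polymorphism $\Rightarrow$ membership in P), and to attack both through the algebraic approach that ties the complexity of $\CSP(\Gamma)$ to structural properties of the algebra $\A=(A,\Pol(\Gamma))$. The preliminary step common to both directions is the Galois connection between relations and operations: the relations pp-definable from $\Gamma$ are exactly those preserved by every polymorphism of $\Gamma$, and adjoining a pp-definable relation to $\Gamma$ does not change the complexity of $\CSP(\Gamma)$ up to log-space reductions. I would then pass first to the \emph{core} of $\Gamma$ (which preserves the complexity and makes all polymorphisms essentially surjective) and then to the \emph{idempotent reduction} obtained by adding the singleton unary relations $\{a\}$ for $a\in A$, which for a core again preserves the complexity. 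This reduces the theorem to the case where $\A$ is a finite idempotent algebra, and by the Maróti--McKenzie characterization the criterion ``$\Gamma$ has a WNU polymorphism'' becomes ``the variety generated by $\A$ omits the unary type of tame congruence theory''.

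For the hardness direction I would argue contrapositively. If $\A$ has no WNU term operation, then the variety generated by $\A$ admits the unary type, so some quotient of a subalgebra of a finite power of $\A$ has an essentially unary ($G$-set) factor. Translating this back through the Galois connection lets me pp-define, on a suitable factor, relations onto which a known $\NP$-complete problem reduces, the canonical targets being the relations encoding $1$-in-$3$-SAT or graph $3$-colouring. Membership in $\NP$ is immediate, since a satisfying assignment is a polynomial-size certificate. This half was essentially available before the full resolution via the Bulatov--Jeavons--Krokhin algebraic reductions, so I expect it to be the routine direction.

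The tractability direction is the substantial one, and I would follow a domain-reduction strategy. Given an instance, I maintain for each variable a subdomain (initially $A$) and, after propagating to a locally consistent, $(2,3)$-minimal instance, repeatedly try to shrink these domains without changing the solution set using three kinds of structural reductions furnished by the WNU polymorphism: an \emph{absorption reduction} (replacing a domain by a proper absorbing subuniverse, via the absorption theory of Barto--Kozik), a \emph{central reduction} (via a central or polynomially complete subuniverse), and a \emph{linear reduction} (solving a residual system over the affine part of the algebra by Gaussian elimination when the sole obstruction is module-like). The correctness argument must show that propagation is sound, that each reduction preserves satisfiability, and that whenever none of the three reductions applies to a nontrivial subalgebra, a contradiction arises; hence the recursion terminates with every domain a singleton, at which point the instance is either trivially satisfiable or its emptiness was already detected during propagation.

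The main obstacle is precisely the \emph{completeness} of this collection of reductions: proving that every nontrivial finite idempotent algebra with a WNU term operation admits a proper absorbing, central, or linear subuniverse, and that these reductions interact correctly with the consistency notion so that no solution is lost. This is where the deep structure theory is unavoidable --- the analysis of binary and higher absorption, the behaviour of centralizers, the isolation of the affine/module obstruction, and a delicate induction on the sizes of the domains together with the ``linear'' congruences. I expect essentially all the difficulty of the theorem to be concentrated here, which is why a complete treatment rests on the machinery of \cite{ZhukFOCSCSPPaper,MyProofCSP}, or, along Bulatov's parallel route, on the coloured-graph analysis of \cite{BulatovProofCSP,bulatov2017dichotomy}.
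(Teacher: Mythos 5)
This theorem is not proved in the paper at all: it is the Bulatov--Zhuk dichotomy theorem, imported with citations to \cite{BulatovProofCSP,bulatov2017dichotomy,ZhukFOCSCSPPaper,MyProofCSP} and used as a black box (e.g.\ in the proof that $\CSP(\{R,\{0\},\{1\},\{2\}\})$ is NP-hard). So there is no in-paper argument to compare yours against. As a roadmap of the actual proofs in those references, your outline is accurate: the reduction to a core and then to the idempotent case via the Galois connection, the Mar\'oti--McKenzie equivalence between having a WNU term and omitting the unary type, the pre-existing hardness half due to Bulatov--Jeavons--Krokhin, and, for tractability, Zhuk's scheme of establishing $(2,3)$-consistency and then shrinking domains by absorbing, central, and polynomially complete subuniverses with a final linear (affine) case handled by Gaussian elimination. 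You correctly locate where all the difficulty lives --- the structural theorem that every nontrivial finite idempotent algebra with a WNU admits one of these reductions compatibly with the consistency notion --- but your text defers exactly that content to the cited machinery, so what you have is a correct architectural summary rather than a proof. For the purposes of this paper that is the same status the theorem already has: a deep external result that is stated and cited, not proved.
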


\subsection{Surjective Constraint Satisfaction Problem}

In this paper we consider a variant of the CSP with an additional 
global constraint saying that a solution should be surjective.
\emph{The Surjective Constraint Satisfaction Problem over a constraint language $\Gamma$ on a domain $A$}, denoted by $\SCSP(\Gamma)$, is
the following decision problem:
given a formula of the form 
$$R_{1}(z_{1,1},\dots,z_{1,n_1})\wedge
\dots
\wedge R_{s}(z_{s,1},\dots,z_{s,n_s}),
$$
where $R_{1},\dots,R_{s}\in\Gamma$ and each $z_{i,j}\in\{x_1,\dots,x_{n}\}$;
decide whether there exists a solution
such that $\{x_{1},\dots,x_{n}\}=A$.
Here we assume that the set of variables 
$\{x_1,\dots,x_n\}$ is fixed and do not require all the variables 
to appear in some constraint.
Unlike the complexity of the CSP, 
the complexity of $\SCSP(\Gamma)$ remains unknown even for very 
simple constraint languages $\Gamma$.

\subsection{Surjective Graph Homomorphism Problem}

Probably, the most natural examples of the Surjective CSP 
are defined as the graph homomorphism problem.
Assume that a graph $H$ is fixed, 
\emph{the Surjective Graph Homomorphism Problem}, denoted by $\SurjHom(H)$, is the problem of deciding 
for a given graph $G$ whether there exists a surjective 
homomorphism from $G$ to $H$. 
This problem is also known as 
\emph{the Vertex-Compaction Problem}
\cite{DBLP:journals/jda/Vikas18}
or \emph{the Surjective $H$-Colouring Problem} \cite{DBLP:journals/computability/GolovachJMPS19}.
For usual homomorphism this problem is known as
\emph{the $H$-colouring problem}.
Note that $\SurjHom(H)$ is equivalent 
to $\SCSP(\{H\})$, where the graph $H$ is viewed as a binary relation.

An interesting fact about the complexity of $\SurjHom(H)$ is that 
it remained unknown for many years even for very simple graphs $H$. 
Recall that the situation was different with the complexity of the CSP and the $H$-colouring problem: even though the general classification remained open for many years, nobody 
could show a simple constraint language or a graph with unknown complexity.
We will discuss two popular examples of graphs with unknown complexity of the $\SurjHom$ in more detail.

First example is the complexity of $\SurjHom$ 
for the reflexive 4-cycle
(undirected having a loop at each vertex), which was formulated 
as an open question in \cite{dantas2005finding} and 
\cite{fleischner2009covering}, later
it was a principal open question in 
\cite{ito2011parameterizing,ito2011disconnected}, and the central question discussed in \cite{cook20102k2} and \cite{dantas20122k2}. 
This problem is known as \emph{the disconnected cut problem}, since it is equivalent to finding a cutset
(a set whose removal
results in a disconnected graph) such that the cutset is disconnected itself,
and it has attracted a lot of interest from the graph theory community. 
This problem was 
known 
to be tractable for many graph classes \cite{cook20102k2,dantas20122k2,fleischner2009covering,ito2011parameterizing}, 
but in 2011 it was finally proved that  the disconnected cut problem is NP-complete \cite{DBLP:journals/jct/MartinP15}.

Second long-standing problem is 
the complexity of $\SurjHom$ 
for the non-reflexive 6-cycle (undirected without loops), 
that has been of interest since 1999 \cite{vikas1999computational}
when Narayan Vikas proved 
NP-completeness of a similar problem for the 6-cycle, 
called \emph{the compaction problem}.
The difference between the compaction problem and the vertex-compaction problem is that for compaction we 
require the homomorphism to be edge-surjective.
In 2017 it was finally proved 
that $\SurjHom$ for the 6-cycle
is also NP-complete \cite{DBLP:conf/mfcs/Vikas17}.
It is worth mentioning that 
the compaction and the vertex-compaction problems 
have the same complexity for all known graphs, and it was conjectured by Winkler, Vikas, and others that these problems are polynomially equivalent.
For more information about 
the relationship between 
the compaction problem and the vertex-compaction problem 
(and also the retraction problem) see \cite{DBLP:journals/jda/Vikas18}.

Many other results on the complexity of 
$\SurjHom(H)$ can be found in 
\cite{DBLP:journals/toct/LaroseMP19,DBLP:journals/computability/GolovachJMPS19, DBLP:journals/siamdm/FockeGZ19,VikasOther, VikasIrreflexive,VikasReflexive,DBLP:journals/algorithmica/Vikas13,DBLP:journals/tcs/GolovachPS12,DBLP:conf/mfcs/Vikas17,DBLP:journals/jct/MartinP15}. For instance, 
the complexity of $\SurjHom$ 
is known
for all graphs on at most four vertices \cite{DBLP:journals/computability/GolovachJMPS19}.
But as far as we know the complexity remains unknown for graphs of size 5 and even for cycles, 
which makes this problem very intriguing because of the simplicity of the formulation.


\subsection{The Complexity of SCSP}

Let us discuss what we know about the complexity of 
the SCSP and the CSP.
The complexity of $\CSP(\Gamma)$ is known for any constraint language $\Gamma$
\cite{BulatovProofCSP,bulatov2017dichotomy,ZhukFOCSCSPPaper,MyProofCSP}. The complexity of $\SCSP(\Gamma)$ 
is widely open.
Since we can always add dummy variables we never use and these dummy variables could give surjectivity,  $\CSP(\Gamma)$ can be trivially reduced to 
$\SCSP(\Gamma)$.
Also, it is sufficient to consider the reflexive 4-cycle $H$ to
get a constraint language $\Gamma=\{H\}$ such that 
$\SCSP(\Gamma)$ is NP-complete but 
$\CSP(\Gamma)$ is trivial and solvable in polynomial time.
Thus, sometimes $\SCSP(\Gamma)$ is harder than $\CSP(\Gamma)$.

Another important observation is that 
the Surjective CSP can be reduced to the CSP over the same language with constants (see \cite[Section 2]{SCSPSURVEY}). Adding constants is equivalent to adding 
unary singleton relations $\{\{a\}\mid a\in A\}$ to the constraint language, because using these relations we can write
$x_{i} = a$.
Let us show how this reduction works.

\begin{lem}[\cite{SCSPSURVEY}]\label{SCSPTOCSP}
There exists a polynomial-time Turing reduction from $\SCSP(\Gamma)$ to $\CSP(\Gamma\cup\{\{a\}\mid a\in A\})$
\end{lem}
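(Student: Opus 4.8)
The plan is to realize the global surjectivity requirement by explicitly guessing, for each domain element, a variable that witnesses it. Write $A=\{a_1,\dots,a_d\}$, where $d=|A|$ is a constant since the domain is fixed. Given an instance $\Phi$ of $\SCSP(\Gamma)$ on variables $x_1,\dots,x_n$, a surjective solution is precisely an assignment satisfying $\Phi$ in which every element of $A$ is attained by at least one variable. The central idea is that such a witness pattern can be specified by a map $g\colon A\to\{1,\dots,n\}$ sending each $a\in A$ to the index of a variable that is to take the value $a$.

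First I would enumerate all maps $g\colon A\to\{1,\dots,n\}$. For each such $g$ I would form the $\CSP(\Gamma\cup\{\{a\}\mid a\in A\})$ instance $\Phi_g$ obtained from $\Phi$ by conjoining the constant constraints $\{a\}(x_{g(a)})$ for every $a\in A$; recall that the unary singleton relation $\{a\}$ lets us express $x_{g(a)}=a$. Note that $g$ is allowed to point at variables that occur in no original constraint, which is what lets free variables supply missing domain values. I would then invoke the $\CSP(\Gamma\cup\{\{a\}\mid a\in A\})$ oracle on each $\Phi_g$ and answer \emph{yes} for $\Phi$ exactly when at least one of the queries $\Phi_g$ is satisfiable; this use of many oracle calls is precisely what makes the reduction a Turing reduction rather than a many-one one.

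Correctness has two directions, both routine. If $\Phi$ has a surjective solution $\varphi$, then by surjectivity each $a\in A$ is attained, so choosing $g(a)$ to be any index with $\varphi(x_{g(a)})=a$ makes $\varphi$ satisfy $\Phi_g$; hence some query succeeds. Conversely, any satisfying assignment $\varphi$ of some $\Phi_g$ satisfies all original constraints and forces $\varphi(x_{g(a)})=a$ for every $a\in A$, so $\varphi$ attains every element of $A$ and is therefore a surjective solution of $\Phi$.

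The only point requiring care is the running time, which is where I expect the main (mild) obstacle to lie: I must verify that the procedure issues only polynomially many oracle calls. Since $d=|A|$ is a fixed constant, the number of maps $g\colon A\to\{1,\dots,n\}$ is $n^{d}$, which is polynomial in the input size; each instance $\Phi_g$ differs from $\Phi$ by only $d$ added constant constraints and is thus constructible in polynomial time. Consequently the whole procedure runs in polynomial time and makes a polynomial number of calls to the $\CSP$ oracle, establishing the claimed polynomial-time Turing reduction.
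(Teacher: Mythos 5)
Your proposal is correct and follows essentially the same approach as the paper: both guess, for each domain element, a variable forced to take that value via the singleton relations, and issue one oracle call per choice of witnesses, of which there are only polynomially many since $|A|$ is fixed. Your write-up is somewhat more detailed (explicit correctness in both directions and the $n^{|A|}$ count), but the reduction is identical.
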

\begin{proof}
Let $A = \{a_1,\dots,a_{n}\}$. Suppose 
we have an instance $\mathcal I$ of 
$\SCSP(\Gamma)$.
First, we guess $n$ variables $x_{i_1},\dots,x_{i_n}$ such that these variables give all elements of $A$ in a solution, 
that is 
$x_{i_j}=a_j$ for every $j$.
Then we consider the following instance of $\CSP(\Gamma\cup\{\{a\}\mid a\in A\})$
$$\bigwedge\limits_{j=1}^{n}(x_{i_{j}} = a_{j})\wedge \mathcal I.$$
If it has a solution, then $\mathcal I$ has a surjective solution. 

Since we have only polynomially many choices for the 
variables, this gives us a polynomial-time Turing reduction to 
$\CSP(\Gamma\cup\{\{a\}\mid a\in A\})$.
\end{proof}

As a corollary we derive that 
$\CSP(\Gamma\cup\{\{a\}\mid a\in A\})$
and 
$\SCSP(\Gamma\cup\{\{a\}\mid a\in A\})$
have the same complexity.

Recall that the classification of the complexity 
of the CSP for different constraint languages was given in terms of polymorphisms
(see Theorem \ref{FederVardiConj}).
It is natural to assume that the complexity of the SCSP can be 
described in a similar way.
In \cite{HubieSlides} Hubie Chen asked 
whether 
Lemma~\ref{SCSPTOCSP} holds in both directions, which would imply 
the characterization of the complexity of the SCSP in terms of polymorphisms.
Thus, he conjectured the following characterization of the complexity of 
$\SCSP(\Gamma)$.

\begin{conj}\label{HubieConj}
For any constraint language $\Gamma$ 
the problems 
$\CSP(\Gamma\cup\{\{a\}\mid a\in A\})$
and 
$\SCSP(\Gamma)$ 
are polynomially equivalent.
\end{conj}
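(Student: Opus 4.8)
The plan is to prove polynomial equivalence by establishing both reductions and noting that one of them is essentially already in hand. Lemma~\ref{SCSPTOCSP} gives a polynomial-time Turing reduction from $\SCSP(\Gamma)$ to $\CSP(\Gamma\cup\{\{a\}\mid a\in A\})$. Moreover $\CSP$ reduces trivially to $\SCSP$ over the same language by appending unused dummy variables that supply surjectivity, and applying Lemma~\ref{SCSPTOCSP} to the language $\Gamma\cup\{\{a\}\mid a\in A\}$ shows that $\SCSP(\Gamma\cup\{\{a\}\mid a\in A\})$ and $\CSP(\Gamma\cup\{\{a\}\mid a\in A\})$ are polynomially equivalent. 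Hence the entire conjecture collapses to one missing step: a polynomial-time reduction from $\SCSP(\Gamma\cup\{\{a\}\mid a\in A\})$ to $\SCSP(\Gamma)$. In words, I must eliminate the constant (singleton) relations while keeping the surjectivity requirement.

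First I would try to simulate the constants by gadgets that exploit surjectivity. Writing $A=\{a_1,\dots,a_n\}$, I would introduce fresh anchor variables $v_{a_1},\dots,v_{a_n}$ together with constraints built only from relations of $\Gamma$, designed so that $v_{a_j}=a_j$ holds in every surjective solution. Given such anchors, each input constraint of the form $x=a_j$ is discharged by substituting $v_{a_j}$ for $x$, which needs no relation at all. The leverage for building anchors is precisely the global surjectivity constraint: since all $n$ elements of $A$ must occur among the variables, a gadget that forces the $n$ anchors to be pairwise distinct and confines the witnesses of surjectivity to them would force $\{v_{a_1},\dots,v_{a_n}\}=A$.

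The hard part, and the step I expect to be the genuine obstacle, is pinning the anchors to \emph{specific} elements rather than merely to a set. Distinctness plus surjectivity determines the anchors only up to a permutation of $A$; to fix $v_{a_j}=a_j$ I must break the symmetry among domain elements using constraints expressible from $\Gamma$. I would therefore scan the relations of $\Gamma$, and relations primitively positively definable from them, for features that single out each $a_j$, and try to assemble these into the anchor gadget. This succeeds cleanly when the relations of $\Gamma$ already distinguish every element, but it is exactly here that trouble looms: if $\Gamma$ is invariant under a nontrivial permutation of $A$, then no conjunction of $\Gamma$-constraints can separate two elements in the same orbit, and no surjective assignment can tell them apart either, so the anchor construction must fail. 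The decisive question on which the conjecture turns is thus whether the surjectivity constraint can ever compensate for a genuine symmetry of $\Gamma$---and settling this is the crux of the whole problem.
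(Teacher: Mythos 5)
Your proposal sets out to \emph{prove} this statement, but the statement is a conjecture that the paper \emph{refutes}: it is false assuming $\mathrm{P}\neq\mathrm{NP}$. Theorem~\ref{CEthm} exhibits a 5-ary relation $R$ for which $\CSP(\{R,\{0\},\{1\},\{2\}\})$ is NP-hard (item (1)) while $\SCSP(\{R\})$ is solvable in polynomial time (item (2)), so the two problems cannot be polynomially equivalent unless $\mathrm{P}=\mathrm{NP}$. Consequently the step you isolate as the crux --- simulating the constant relations inside $\SCSP(\Gamma)$ via anchor variables pinned down by surjectivity --- is not merely difficult; it is impossible in general. To your credit, you correctly assemble the easy ingredients (Lemma~\ref{SCSPTOCSP}, the trivial $\CSP$-to-$\SCSP$ reduction via dummy variables, and the resulting equivalence of $\CSP$ and $\SCSP$ once constants are present), you correctly identify the one missing direction, and you honestly flag it as unresolved rather than claiming it. But what you present is therefore not a proof, and under standard assumptions no proof exists.

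Your diagnosis of \emph{why} the anchor construction might fail is also off target. You predict failure exactly when $\Gamma$ admits a nontrivial symmetry of the domain, and you suggest the construction ``succeeds cleanly when the relations of $\Gamma$ already distinguish every element.'' The paper's counterexample shows otherwise: $R$ has \emph{no} nontrivial automorphism --- every tuple of $R$ begins with $0$, so an automorphism must fix $0$, and the transposition of $1$ and $2$ does not preserve $R$ (it sends the tuple $(0,2,2,1,0)\in R$ to $(0,1,1,2,0)\notin R$) --- yet the reduction still cannot exist. The actual failure mode is different: in any instance over $\{R\}$, surjectivity comes essentially for free, because the fifth coordinate of $R$ already ranges over all of $A$ and the tuple $(0,1,1,1,2)\in R$ lets one build a trivial surjective solution; and whenever one restricts that fifth coordinate to block this, $R$ collapses into a conjunction of tractable relations (this is exactly how the paper's polynomial-time algorithm for $\SCSP(\{R\})$ works). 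So surjectivity simply lacks the constraining power of constants for this $\Gamma$, symmetry or no symmetry. This is also why the paper concludes, by comparing items (2) and (3) of Theorem~\ref{CEthm}, that the complexity of $\SCSP(\Gamma)$ cannot even be described in terms of polymorphisms of $\Gamma$ --- a stronger negative message than the one your approach anticipates.
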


An operation $f$ is called 
\emph{idempotent} if it satisfies 
$f(x,x,\dots,x) = x$.
Then, using Theorem \ref{FederVardiConj}
Conjecture \ref{HubieConj} can be formulated in the following form.

\begin{conj}\label{HubieConjTwo}
For any constraint language $\Gamma$ the problem $\SCSP(\Gamma)$ is solvable in polynomial time 
if there exists an idempotent WNU polymorphism of $\Gamma$;
$\SCSP(\Gamma)$ is NP-complete otherwise.
\end{conj}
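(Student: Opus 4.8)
The plan is to separate the two implications hidden in the statement and then attack the one that can actually fail. One direction is free: if $\Gamma$ has an idempotent WNU polymorphism $f$, then $f$ also preserves every singleton relation $\{a\}$ (since $f(a,\dots,a)=a$), so $f$ is a WNU polymorphism of $\Gamma\cup\{\{a\}\mid a\in A\}$; by Theorem~\ref{FederVardiConj} this makes $\CSP(\Gamma\cup\{\{a\}\mid a\in A\})$ polynomial, and then Lemma~\ref{SCSPTOCSP} gives $\SCSP(\Gamma)$ in P. Hence the ``solvable in polynomial time'' half of the statement is genuinely true, and the whole content of the conjecture is the converse: that the \emph{absence} of an idempotent WNU polymorphism forces $\SCSP(\Gamma)$ to be NP-complete. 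My expectation, consistent with the aim announced in the abstract, is that this converse is false, so the proposal is to refute Conjecture~\ref{HubieConjTwo} by exhibiting a single constraint language that violates it.

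The counterexample $\Gamma$ must meet three requirements at once. First, $\Gamma$ should have a WNU polymorphism, so that $\CSP(\Gamma)$ is polynomial; this is forced on us, because the dummy-variable reduction of the introduction gives $\CSP(\Gamma)\le\SCSP(\Gamma)$, and if $\CSP(\Gamma)$ were hard there would be no contradiction to derive. Second, $\Gamma$ must have no idempotent WNU polymorphism, so that by Theorem~\ref{FederVardiConj} the problem $\CSP(\Gamma\cup\{\{a\}\mid a\in A\})$ is NP-complete. Third, and crucially, $\SCSP(\Gamma)$ must be solvable in polynomial time. The clean way to arrange the first two conditions together is to force a \emph{constant} operation into $\Pol(\Gamma)$, for instance by letting every relation of $\Gamma$ contain a fixed constant tuple $(a_0,\dots,a_0)$: a constant operation is a (non-idempotent) WNU, while the idempotent part of the clone can still be whittled down to the projections. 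I would search for such a $\Gamma$ on a small domain, using the No-Rainbow relation and the hardness analysis developed for it as a template for making the idempotent reduct as poor as possible.

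The heart of the argument, and the step I expect to be the main obstacle, is proving that $\SCSP(\Gamma)$ is in P even though $\CSP(\Gamma\cup\{\{a\}\mid a\in A\})$ is NP-complete. The tension is deliberate: the trivial satisfying assignment of $\CSP(\Gamma)$ is the constant tuple $a_0^n$, which is \emph{not} surjective, and the hardness of the constants version comes precisely from being allowed to pin variables away from this trivial solution. The idea is to show that the global surjectivity demand does the opposite of pinning: requiring every element of $A$ to be realized propagates enough structure through the constraints to collapse the search space, so that the surjective solutions (if any) have a rigid, polynomially checkable shape, or the residual problem becomes Horn- or $2$-SAT-like and hence tractable. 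Making the local constraints NP-hard to satisfy under arbitrary constants while keeping the ``use every value'' instances efficiently decidable is exactly the balance that has to be engineered and verified.

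Finally, to upgrade the refutation into the stronger assertion that the complexity of $\SCSP$ cannot be captured by $\Pol(\Gamma)$ at all, I would exhibit two languages $\Gamma_1,\Gamma_2$ with $\Pol(\Gamma_1)=\Pol(\Gamma_2)$ but $\SCSP(\Gamma_1)$ in P and $\SCSP(\Gamma_2)$ NP-hard. Such a pair is plausible because, unlike $\CSP$, the $\SCSP$ is \emph{not} invariant under primitive positive definitions: existentially quantified auxiliary variables do not count toward surjectivity, so enriching a language by pp-definable relations, which leaves the polymorphism clone unchanged, can change the complexity of $\SCSP$. Taking $\Gamma_1$ to be the counterexample above and $\Gamma_2$ a suitably chosen language in the same co-clone should then separate the two problems while keeping their polymorphisms identical.
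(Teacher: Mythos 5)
Your reading of the statement is right: the tractability half follows from Lemma~\ref{SCSPTOCSP} together with Theorem~\ref{FederVardiConj} (an idempotent WNU preserves every singleton), so the conjecture stands or falls with its converse, and the paper indeed refutes it by a counterexample (Theorem~\ref{CEthm}). But your proposal stops exactly where the proof has to begin: you list the three properties a counterexample $\Gamma$ must have and then declare the third one --- $\SCSP(\Gamma)$ in P while $\CSP(\Gamma\cup\{\{a\}\mid a\in A\})$ is NP-complete --- to be ``the main obstacle'' without resolving it. No relation is exhibited and no tractability algorithm is given, so there is no proof yet. Worse, the one concrete mechanism you do propose, putting a constant tuple $(a_0,\dots,a_0)$ into every relation, cannot by itself deliver the third property: the No-Rainbow relation $N$ contains all three constant tuples (so it has a constant, hence non-idempotent, WNU polymorphism and a trivial CSP) and has no idempotent WNU (since $\CSP(\{N,\{0\},\{1\},\{2\}\})$ is NP-complete), yet $\SCSP(\{N\})$ is NP-complete --- that is the paper's own Theorem~\ref{NoRainbowTHM}. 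So ``trivial CSP but no idempotent WNU'' is fully compatible with a hard SCSP, and the entire difficulty lives in engineering tractability of the surjective version, which your plan leaves open.

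For comparison, the paper's counterexample (Section~\ref{CounterExampleSection}) is the $5$-ary relation $R$ on $\{0,1,2\}$, and its design principle is not a constant tuple but a decomposition dichotomy: the first coordinate of $R$ is identically $0$ and coordinates $2$--$4$ lie in $\{1,2\}$, so any instance that genuinely uses $R$ admits the surjective solution sending the three occurrence classes to $0$, $1$, $2$ (using $(0,1,1,1,2)\in R$) --- unless some variable forces the fifth coordinate into $\{0\}$ or $\{1,2\}$, in which case $R$ collapses into unary relations plus a binary equality and the residual instance is tractable. Hardness of $\CSP(\{R,\{0\},\{1\},\{2\}\})$ comes from the projection $R'$ of $R$ onto coordinates $2,3,4$, which admits no idempotent WNU on $\{1,2\}$. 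Your closing observation --- that SCSP is not invariant under pp-definitions because existentially quantified variables do not count toward surjectivity --- is exactly the mechanism at work in the paper's comparison of $\SCSP(\{R\})$ (in P) with $\SCSP(\{R'\})$ (NP-hard); note, though, that the paper only obtains $\Pol(\{R\})\subseteq\Pol(\{R'\})$ rather than equality of clones, and that inclusion already breaks the expected anti-monotonicity of complexity in the polymorphism clone, which is all that is needed and also all your pp-definability argument would yield without additional work.
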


As it was proved in \cite{SCSPforTwo1997} (see also \cite{SCSPforTwoBook}), Conjectures \ref{HubieConj} and \ref{HubieConjTwo}
hold for a 2-element domain. 
Also, all known results on the complexity of the 
SCSP agree with these conjectures (see \cite{SCSPSURVEY}).

Additionally, in \cite{chen2014algebraic} Hubie Chen confirmed that polymorphisms of $\Gamma$ can be used to 
describe the complexity of $\SCSP(\Gamma)$.
For instance, 
he proved NP-hardness of $\SCSP(\Gamma)$ for 
$\Gamma$ admitting only essentially unary polymorphisms.
For more results on the complexity of the SCSP see the survery 
\cite{SCSPSURVEY}.

\subsection{No-Rainbow Problem}

In this paper we consider the most famous constraint language with 
unknown complexity of the SCSP.
Let $N =\{(a,b,c)\in\{0,1,2\}^{3}\mid \{a,b,c\}\neq\{0,1,2\}\}$. 
The problem $\SCSP(\{N\})$ is called \emph{No-Rainbow Problem}
because 
if we interpret 
$0,1$, and 2 as colors then the relation $N$ forbids  
``rainbow''.
Apparently, No-Rainbow Problem was first formulated  under this name in \cite{bodirsky2004disser}.
Nevertheless, this problem can be viewed as the problem of 
coloring of a co-hypergraph in exactly 3 colors,
and such problems were studied in many papers earlier.
The notion of coloring of a mixed hypergraph appeared in \cite{voloshin1993mixed}. 
Sufficient and necessary conditions for the existence of a
$k$-coloring of a co-hypergraph were studied in \cite{diao2000upper}.
The complexity of the problem of such coloring was studied in
\cite{ColoringMixedHypertrees}. 

Despite the simplicity of the formulation, the complexity of $\SCSP(\{N\})$ 
was an open question for many years and was formulated 
many times as an important open problem 
\cite{SCSPSURVEY,HubieSlides}.
For example, Hubie Chen presented this 
problem as one of three concrete open questions 
at several conferences (see \cite{HubieSlides})
emphasizing that this problem is an important step 
toward a full classification of the complexity.

Note that the CSP 
over $N$ is trivial because 
any instance has a trivial solution 
where all variables are equal.
But if we add all constant relations then 
the problem becomes 
NP-hard, that is 
$\CSP(\{N,\{0\},\{1\},\{2\}\})$ is NP-complete \cite{BulatovForThree}. 

As we mentioned earlier 
$\SCSP(\{N,\{0\},\{1\},\{2\}\})$
and $\CSP(\{N,\{0\},\{1\},\{2\}\})$ have the same complexity.
It is interesting to compare the complexity of 
the CSP and the SCSP if we add some constant relations but probably not all of them.
$$\begin{array}{|c|c|c|}
\hline
  \text{Constraint Language} & \CSP & \SCSP \\ \hline
\{N\}  &\text{P (Trivial)} & ???\\
\{N,\{0\}\}  &\text{P (Trivial)} & ???\\
\{N,\{0\},\{1\}\}  &\text{P (Trivial)} & \text{NP-comp. \cite{SCSPSURVEY}}\\
\{N,\{0\},\{1\},\{2\}\}& \text{NP-comp.}& \text{NP-comp.}\\
\hline
\end{array}$$


For instance, as it was shown in \cite{SCSPSURVEY}
if we add just two constant relations then 
the CSP is still trivial but 
the SCSP is already NP-complete. 
It can be shown that the only problems in the table with unknown complexity, that is  
$\SCSP(\{N\})$ and $\SCSP(\{N,\{0\}\})$, are polynomially equivalent. 
In fact, 
to reduce $\SCSP(\{N,\{0\}\})$ to $\SCSP(\{N\})$ 
we just 
replace all variables $x_i$ that appear in a constraint $x_i = 0$ by a unique fresh variable $z$ and remove all constraints of the form 
$x_i = 0$.

 \section{Main Results}\label{MainResultsSection}

In this section we formulate two main results of this paper. 

First, we proved that the No-Rainbow Problem is NP-hard, 
and therefore it is NP-complete.

\begin{thm}\label{NoRainbowTHM}
$\SCSP(\{N\})$ is NP-complete.
\end{thm}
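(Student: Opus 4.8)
The plan is to prove membership in \NP\ and then NP-hardness by a reduction from $\CSP(\{N,\{0\},\{1\},\{2\}\})$, which is NP-complete \cite{BulatovForThree}. Membership is immediate: given an assignment, one checks each constraint $N(\cdots)$ and the surjectivity condition $\{x_1,\dots,x_n\}=\{0,1,2\}$ in polynomial time. The conceptual difficulty is that the three constant relations cannot simply be realized by gadgets: every permutation of $\{0,1,2\}$ is an automorphism of $N$, so every relation primitive-positively definable from $N$ is invariant under the diagonal action of $S_3$, whereas $\{0\},\{1\},\{2\}$ are not. Hence no purely local (pp-)reduction can work, and the global surjectivity constraint must be exploited in an essential way.

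My plan is therefore to simulate the constants only up to a global permutation of the colours. Concretely, I would introduce three \emph{anchor} variables $a_0,a_1,a_2$ together with a gadget forcing them to receive three pairwise distinct colours in every surjective solution (a ``rainbow triple''). Given such a rainbow triple, the conjunction $N(a_i,a_j,x)\wedge N(a_j,a_k,x)$, where $\{i,j,k\}=\{0,1,2\}$, forces $x$ to take exactly the colour of $a_j$: the first conjunct excludes the colour of $a_k$, the second excludes the colour of $a_i$, leaving only the colour of $a_j$. Thus the anchors serve as \emph{relative constants}.

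Using this, I would transform an instance $\Phi$ of $\CSP(\{N,\{0\},\{1\},\{2\}\})$ into an instance of $\SCSP(\{N\})$ by adjoining the anchors and their rainbow-forcing gadget, keeping every $N$-constraint of $\Phi$, and replacing each constant constraint $x=m$ by the pinning gadget above referring to $a_m$. Surjectivity of the resulting instance is then automatic, since the anchors already exhibit all three colours. If in some surjective solution the anchors realize a permutation $\sigma$ of the colours, the restriction to the variables of $\Phi$ is exactly a solution of the instance $\sigma(\Phi)$ obtained by replacing each constant $m$ by $\sigma(m)$; and since $N$ is $S_3$-invariant, $\sigma(\Phi)$ is satisfiable iff $\Phi$ is (apply $\sigma^{-1}$). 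Hence the constructed instance has a surjective solution iff $\Phi$ is satisfiable, yielding a polynomial-time many-one reduction. One could equally target $\SCSP(\{N,\{0\}\})$, polynomially equivalent to $\SCSP(\{N\})$ as noted above, but the residual symmetry swapping $1$ and $2$ shows that even with one genuine constant the two remaining anchors cannot be pinned locally.

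The main obstacle is precisely the rainbow-forcing gadget, where surjectivity must break the colour symmetry. The basic tool I would rely on is the ``two-colour block'': if one imposes $N$ on every triple of a set $S$ of variables, then $S$ uses at most two colours in any assignment, since three colours on $S$ would contain a rainbow triple. My intended approach is to organize all non-anchor variables into controlled two-colour blocks whose palettes are linked so tightly that the only way to realize all three colours globally is for $a_0,a_1,a_2$ to spread across the three colour classes. Making this forcing exact---so that \emph{no} surjective solution can leave the anchors non-rainbow---is the delicate combinatorial heart of the argument, while the reduction and correctness described above are routine.
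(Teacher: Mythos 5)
Your architecture is essentially the one the paper's first proof uses: three anchor variables that every surjective solution must colour with three distinct values, two-colour blocks obtained by imposing $N$ on all triples of a set of variables, and relative constants via $N(a_i,a_j,x)\wedge N(a_j,a_k,x)$ (that pinning gadget is correct, and your $S_3$-symmetry observation explaining why genuine constants cannot be pp-defined from $N$ is also right). But the proof is not there: the rainbow-forcing construction, which you defer as ``the delicate combinatorial heart,'' \emph{is} the theorem. Everything you actually write down --- NP membership, the pinning gadget, the conjugation-by-$\sigma$ argument --- is the routine part, and you give neither a construction of the gadget nor any evidence that one exists.

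Moreover, the surrounding plan has a structural tension that the missing gadget would have to absorb. For soundness, \emph{every} variable of the combined instance must be controlled by the gadget: if the variables of $\Phi$ sit outside the two-colour blocks, a globally surjective solution can witness the missing colours on $\Phi$-variables while all anchors collapse to one value, making every pinning constraint $N(a,a,x)$ vacuous. For example, $\Phi = N(v_1,v_2,v_3)\wedge (v_1=0)\wedge (v_2=1)\wedge (v_3=2)$ is unsatisfiable, yet setting all gadget variables and anchors to $0$ and $(v_1,v_2,v_3)=(1,2,2)$ satisfies your combined instance surjectively. But if you do confine every $\Phi$-variable to a two-colour block, you cannot keep ``every $N$-constraint of $\Phi$'' verbatim with $\Phi$ drawn from the ternary problem $\CSP(\{N,\{0\},\{1\},\{2\}\})$, since a satisfiable $\Phi$ may need a variable to take a value outside whichever block you assigned it to, and you do not know that value in advance. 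The paper resolves exactly this tension by reducing from a Boolean NP-hard CSP instead ($\CSP(\Gamma_0)$ with $\Gamma_0=\{x=y\vee z=0,\ x=y\vee z=1,\ x=0,\ x=1\}$, or $\CSP(\{NAE_3\})$ in the second proof) and encoding each Boolean variable $u_i$ by three domain-restricted variables $x_i\in\{0,1\}$, $y_i\in\{1,2\}$, $z_i\in\{0,2\}$ linked so that $y_i=x_i+1$ and $z_i=2x_i$; it then takes ten families of constraints $\mathcal C_1,\dots,\mathcal C_{10}$ and a five-case analysis of where the third colour can appear to make the forcing exact. Until you exhibit such a construction and verify it against all non-intended surjective solutions, the reduction is not established.
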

Recently, Hubie Chen published a 
paper explaining the algebraic framework for proving such hardness results \cite{chen2020algebraic}.
He shows that this approach 
can also be used
to prove NP-hardness 
for the reflexive 4-cycle (the disconnected cut problem)
and for all NP-hard constraint languages on a 2-element domain.

Theorem \ref{NoRainbowTHM} agrees with Conjectures \ref{HubieConj} and \ref{HubieConjTwo}.
Unfortunately, we found a counter-example to these conjectures, which is the second main result of this paper.
Let us define 5-ary and 3-ary relations on the set $\{0,1,2\}$
(here columns are tuples):
$$R = 
\begin{pmatrix}
0&0&0&0&0&0\\
2&2&2&1&1&1\\
2&1&1&1&1&1\\
1&2&2&1&1&1\\
0&1&2&0&1&2
\end{pmatrix}, R' = 
\begin{pmatrix}
2&2&1\\
2&1&1\\
1&2&1
\end{pmatrix}.$$
Note that $R'$ is the projection of $R$ onto coordinates 2, 3 and 4.
Hence, every polymorphism of $R$ is also a polymorphism of 
$R'$.

\begin{thm}\label{CEthm} We have
\begin{enumerate}
    \item[(1)] $\CSP(\{R,\{0\},\{1\},\{2\}\})$ is NP-hard;
    \item[(2)] $\SCSP(\{R\})$ is solvable in polynomial time;
    \item[(3)] $\SCSP(\{R'\})$ is NP-Hard.
\end{enumerate}
\end{thm}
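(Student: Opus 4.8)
\textbf{Proof plan for Theorem \ref{CEthm}.}

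The three parts have quite different flavors, so I would treat them separately. For part~(1), the plan is to exhibit a polynomial-time reduction from a known NP-hard problem to $\CSP(\{R,\{0\},\{1\},\{2\}\})$. The natural candidate is $\CSP(\{N,\{0\},\{1\},\{2\}\})$, which is NP-complete by \cite{BulatovForThree}, since $R$ lives on the same domain $\{0,1,2\}$ and the constants are already available. Concretely, I would try to pp-define the no-rainbow relation $N$ (or some other known NP-hard relation on three elements) from $R$ together with the singleton relations, using existential quantification and conjunction; since constants are present, this amounts to showing that the relational clone generated by $R$ and the constants is rich enough to be NP-hard. The first coordinate of $R$ is constantly $0$ and the last coordinate ranges over all of $\{0,1,2\}$, so fixing coordinates via the singletons and projecting should carve out useful sub-relations. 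The main work here is the combinatorial bookkeeping of finding the right pp-definition; I expect this to be routine once the target relation is chosen.

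For part~(2), I want to show $\SCSP(\{R\})$ is in P, which is the most interesting direction because it is what makes the example a genuine counter-example. The key structural observation is that the first coordinate of every tuple of $R$ is $0$, so \emph{every} constraint $R(\dots)$ forces one of its argument variables to take value $0$. The plan is to analyze, for a given instance, which variables can possibly take the values $1$ and $2$ in a surjective solution. Since surjectivity on the three-element domain only requires that $0$, $1$, and $2$ each appear at least once, and $0$ is essentially free (indeed forced) everywhere, the real constraint is whether we can simultaneously realize a $1$ somewhere and a $2$ somewhere. I would argue that deciding this reduces to a tractable condition---most likely one can first check ordinary satisfiability (trivial here, as all-zeros works), and then test realizability of value $1$ at some variable and value $2$ at some variable, perhaps independently or via a small fixed number of $2$-SAT-like or propagation subproblems governed by the structure of $R$. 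The hard part will be proving that these realizability questions decouple sufficiently (or are each individually polynomial), so that surjectivity does not secretly encode an NP-hard interaction; this is where the specific numeric structure of the six columns of $R$ must be used carefully.

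For part~(3), I want $\SCSP(\{R'\})$ to be NP-hard, and the leverage is that $R'$ is the projection of $R$ onto coordinates $2,3,4$, so its polymorphisms are a superset of those of $R$, yet the surjective problem becomes harder. The approach is again a reduction, and the natural source is part~(1): I would aim to simulate the constant relations $\{0\},\{1\},\{2\}$ within $\SCSP(\{R'\})$ using the surjectivity constraint together with pp-definitions over $R'$. The standard trick, mirroring the reduction sketched in the introduction for $\SCSP(\{N,\{0\}\})$ to $\SCSP(\{N\})$, is to introduce a small fixed set of ``global'' variables pinned to the three domain values and to show that $R'$ can relate an arbitrary variable to these pinned values so as to emulate constants. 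Examining $R'$, its three columns realize exactly the value-patterns needed to distinguish the three constants, so I expect the relation to be expressive enough. The main obstacle across the whole theorem is part~(2): establishing a polynomial algorithm requires a complete case analysis of how value $1$ and value $2$ can be realized under the constraints, and ruling out any hidden hardness is considerably more delicate than the two hardness reductions.
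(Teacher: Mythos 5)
Your plan for part~(1) is broadly in the right spirit (show the relational clone of $R$ plus constants is NP-hard), and the paper does exactly this in a sharper form: $R'$ is the projection of $R$ onto coordinates $2,3,4$, it lies entirely in $\{1,2\}^3$, it admits no idempotent WNU polymorphism on $\{1,2\}$, and Theorem~\ref{FederVardiConj} then gives NP-hardness of $\CSP(\{R',\{1\},\{2\}\})$ and hence of $\CSP(\{R,\{0\},\{1\},\{2\}\})$. No search for a pp-definition of $N$ is needed.

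The serious problems are in parts~(2) and~(3). For part~(2) your plan rests on a false premise: the all-zeros assignment does \emph{not} satisfy $R$, since coordinates $2,3,4$ of every tuple of $R$ lie in $\{1,2\}$; indeed $R$ contains no constant tuple, so ordinary satisfiability is not ``trivial'' in the way you assert, and $0$ is forced only at the first position while being \emph{forbidden} at positions $2$--$4$. The actual mechanism in the paper is different: classify each occurrence of a variable by the projection of its constraint onto that position ($\{0\}$ at position~1, a subset of $\{1,2\}$ at positions $2$--$4$, all of $A$ at position~5). If some variable occurs both at position~5 and at a position of another type, the fifth coordinate of that $R$-constraint is restricted and $R$ then collapses to a conjunction of the binary equality $\sigma=\{(1,1),(2,2)\}$ and unary relations; iterating this either eliminates $R$ entirely (leaving a tractable instance) or leaves an instance in which the three occurrence types are disjoint, and then assigning $0$, $1$, $2$ by type hits the tuple $(0,1,1,1,2)\in R$ and is automatically a surjective solution. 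Your proposed ``decoupled realizability of $1$ and $2$'' analysis does not engage with this structure and, as stated, cannot be carried out.

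For part~(3) your plan to emulate the constants $\{0\},\{1\},\{2\}$ inside $\SCSP(\{R'\})$ via pinned global variables cannot work as described: $R'$ never takes the value $0$, so no variable occurring in any constraint can be pinned to $0$, and the constant $\{0\}$ is not pp-definable from $R'$. The correct (and much shorter) observation is precisely that $R'\subseteq\{1,2\}^3$: every constrained variable is forced into $\{1,2\}$, the value $0$ can only be supplied by a variable occurring in no constraint, and so $\SCSP(\{R'\})$ is literally the surjective CSP of a Boolean relation, which is NP-hard by the known two-element classification \cite{SCSPforTwo1997,SCSPforTwoBook} (using the same absence of an idempotent WNU already verified for part~(1)).
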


Thus, items (1) and (2) of 
Theorem \ref{CEthm} disprove
Conjectures \ref{HubieConj} and \ref{HubieConjTwo}. 
Also, comparing (2) and (3) we derive that 
the complexity of $\SCSP(\Gamma)$ cannot be described in terms of polymorphisms.
Hence, we need a brand new idea to describe the complexity of the SCSP for all constraint languages.

Nevertheless, our counter-example 
is of arity 5 and it is important for the proof.
Thus, probably for binary relations 
Conjectures \ref{HubieConj} and \ref{HubieConjTwo} still hold.
Another way to break our counter-example is 
to add all projections of relations from the constraint language to the constraint language. 
For instance, we could achieve this situation by considering only $\Gamma$ that are defined as the set of all invariants of some algebra, which is a very natural way to define a constraint language for the CSP.
Thus, it is interesting to answer the following open questions.

\begin{problem}
Do Conjectures \ref{HubieConj} and \ref{HubieConjTwo} hold for 
$\Gamma = \{H\}$, where $H$ is a binary relation?
\end{problem}

\begin{problem}
Do Conjectures \ref{HubieConj} and \ref{HubieConjTwo} hold for 
$\Gamma$ consisting of binary relations?
\end{problem}

\begin{problem}
Do Conjectures \ref{HubieConj} and \ref{HubieConjTwo} hold for 
$\Gamma$ that is closed under projections (adding existential quantifiers)?
\end{problem}

For an algebra $\mathbf A = (A;f_1,f_{2},\dots)$
by $\Inv(\mathbf A)$ we denote the set of all relations
preserved by every operation $f_{i}$ of an algebra.

\begin{problem}
Do Conjectures \ref{HubieConj} and \ref{HubieConjTwo} hold for 
$\Gamma$ such that $\Gamma=\Inv(\mathbf A)$
for some algebra $\mathbf A$?
\end{problem}

The paper is organized as follows.
In Section \ref{FirstProofSection} and Section \ref{SecondProofSection} we give two different proofs 
of Theorem \ref{NoRainbowTHM}.
Both proofs are based on the reduction from an NP-hard CSP problem, 
the first reduction is easier in terms of complexity (just three variables for every variable of the original instance) but the idea of the reduction is hidden there.
We hope the second reduction is better for understanding
but there we create nine variables for every variable of the original instance.
In Section \ref{CounterExampleSection} we 
prove Theorem \ref{CEthm}.

\section{No-Rainbow problem is NP-Hard (the first proof)}
\label{FirstProofSection}

Here we present our first proof of the fact that $\SCSP(\{N\})$ is NP-hard.
By $[n]$ we denote the set $\{1,2,\dots,n\}$.
Let $\Gamma_{0} = \{x=y\vee z=0,x=y\vee z=1, x=0, x=1\}$ be a constraint language 
on $\{0,1\}$.
The problem $\CSP(\Gamma_{0})$ is known to be NP-Hard \cite{Schaefer}.
Hence, to prove Theorem \ref{NoRainbowTHM}
it is sufficient to 
prove the following theorem.

\begin{thm}\label{NoRainbowIsHard}
$\CSP(\Gamma_{0})$ can be polynomially reduced to $\SCSP(\{N\})$.
\end{thm}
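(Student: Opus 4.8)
The plan is to construct, for each instance $\Phi$ of $\CSP(\Gamma_0)$, an instance $\Psi$ of $\SCSP(\{N\})$ such that $\Phi$ is satisfiable if and only if $\Psi$ has a surjective solution. Since $\Gamma_0$ lives on $\{0,1\}$ and $N$ lives on $\{0,1,2\}$, the core difficulty is to simulate Boolean reasoning inside a ternary surjective problem. The natural idea is to reserve the value $2$ as a ``marker'' and force surjectivity to pin down which variables carry it, while the $0/1$ values do the actual Boolean work. The cleanest way to engineer this is to introduce, for each variable $x$ of $\Phi$, a small fixed set of fresh variables (the intro says three per variable in this first proof) and to set up gadgets, built only from copies of $N$, whose surjective solutions encode truth assignments.

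\medskip

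\noindent\textbf{Step 1: understand what $N$ can express.} First I would work out the structure of the no-rainbow relation $N(a,b,c)$: it holds iff $\{a,b,c\}\neq\{0,1,2\}$, i.e.\ at least one of the three colors is missing among the three positions. I expect the key observation to be that if two of the three arguments are already known to take distinct values, say one is $0$ and one is $1$, then $N$ forces the third argument to avoid $2$, i.e.\ to lie in $\{0,1\}$. Dually, fixing arguments cleverly lets $N$ act as a binary ``not both different from a fixed color'' constraint. I would catalog exactly which binary and unary behaviors of $N$ are available by substituting repeated variables and constants.

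\medskip

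\noindent\textbf{Step 2: build a Boolean domain gadget.} Using Step 1, I would design a gadget that forces a chosen set of variables into $\{0,1\}$, so that on those variables $N$ degenerates to a usable Boolean relation. The surjectivity requirement is the lever here: the value $2$ must appear somewhere, so I would allocate one (or few) dedicated variables to carry $2$, arranged so that no other variable is permitted to take $2$. Concretely, I expect to tie the three fresh variables per original variable together with copies of $N$ so that a surjective solution is forced to place all the ``working'' variables into $\{0,1\}$, with the value $2$ confined to a controlled location. Once this domain restriction is in place, the remaining constraints of $\Gamma_0$ --- namely $x=y\vee z=0$, $x=y\vee z=1$, $x=0$, and $x=1$ --- must be simulated by no-rainbow constraints on $\{0,1\}$-valued variables.

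\medskip

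\noindent\textbf{Step 3: simulate each relation of $\Gamma_0$ and verify the equivalence.} For the two disjunctive relations I would exhibit an explicit pp-definition (conjunction of $N$-atoms, possibly using the fresh per-variable helper variables and the $2$-marker) that, restricted to $\{0,1\}$-assignments, defines exactly $x=y\vee z=0$ and $x=y\vee z=1$; the constant relations $x=0,x=1$ should follow from Step 2's machinery. Finally I would prove both directions: given a satisfying assignment of $\Phi$, extend it to a surjective assignment of $\Psi$ by the prescribed values on the fresh variables and the marker; conversely, given any surjective solution of $\Psi$, argue that the gadget forces the working variables into $\{0,1\}$ and that their restriction satisfies $\Phi$. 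The main obstacle I anticipate is precisely the interaction between the \emph{global} surjectivity constraint and the \emph{local} Boolean gadgets: I must ensure that forcing $2$ to appear (for surjectivity) does not inadvertently free some working variable to take $2$ and thereby break the Boolean simulation, and conversely that a genuine Boolean solution can always be completed to a surjective one without conflict. Getting this bookkeeping right --- so that the $2$-marker is both \emph{necessary} and \emph{harmless} --- is where the cleverness of the three-variables-per-variable encoding will be spent, and it is what the introduction warns makes the reduction short but its idea ``hidden.''
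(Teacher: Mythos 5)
Your high-level plan points in the right direction --- reserve one value as a marker, use surjectivity to force it into a controlled place, and do Boolean work on the remaining two values --- and this is indeed the spirit of the paper's construction. But as written the proposal has a genuine gap, and it sits exactly where you locate ``the cleverness.'' First, Step 1 cannot get off the ground as described: there are no constants in the language $\{N\}$, and identifying variables in $N$ yields only trivial relations (e.g.\ $N(x,x,y)$ holds for \emph{all} $x,y$), so no nontrivial unary or binary constraint is locally expressible. Consequently your Step 2 goal of arranging that ``no other variable is permitted to take $2$'' is not achievable by any local gadget: every instance of $\SCSP(\{N\})$ is invariant under all permutations of $\{0,1,2\}$ and is satisfied by every constant assignment, so the only leverage is the single global surjectivity requirement interacting with \emph{families} of $N$-constraints. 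The paper handles this by taking three distinguished variables $x,x',y'$ and three groups of variables, each group constrained (by all triples of $N$ within the group) to omit at least one value; surjectivity then forces the three groups to omit \emph{different} values, but only up to a five-way case analysis over where the third color can actually land. That case analysis is the bulk of the correctness proof and is precisely the part your proposal defers.

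The second missing idea is how the disjunctive relations $x=y\vee z=0$ and $x=y\vee z=1$ are simulated once the domain restriction is in place. A single Boolean track is not enough: the paper maintains, for each original variable $u_i$, three correlated copies $x_i\in\{0,1\}$, $y_i\in\{1,2\}$, $z_i\in\{0,2\}$ with $y_i=x_i+1$ and $z_i=2x_i$ (enforced again via $N$-constraints and the distinguished variables). With these shifted copies, $N(x_i,x_j,y_k)$ literally \emph{is} the constraint $u_i=u_j\vee u_k=0$, and $N(y_i,y_j,x_k)$ is $u_i=u_j\vee u_k=1$; the constants $u_i=1$ and $u_i=0$ become equalities with the distinguished variable $x$. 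This is what the three fresh variables per original variable are for, and without this (or an equivalent) device your Step 3 has no way to distinguish ``$z=0$'' from ``$z=1$'' inside a symmetric ternary relation. So the proposal is a plausible outline, but both the domain-forcing gadget and the relation simulation require specific constructions that are not supplied, and the correctness direction (every surjective solution decodes to a satisfying assignment) is exactly the nontrivial content of the theorem.
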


Let us define the reduction. Let 
$\mathcal I$ be an instance of $\CSP(\Gamma_{0})$.
Let us build an instance $\mathcal J$ of $\SCSP(\{N,=\})$
such that $\mathcal I$ holds if and only if $\mathcal J$
has a surjective solution.
Note that 
the problems 
$\SCSP(\{N,=\})$
and 
$\SCSP(\{N\})$ are equivalent because we can always propagate out the equalities.

The idea is to introduce constraints (see 
$\mathcal C_{1},\mathcal C_{2},\dots, \mathcal C_{10}$ below) such that 
three concrete variables take all the values from the domain 
in any surjective solution. 
Below such variables are $x,x',y'$. 
Other variables still have some freedom, for instance $x_{1},\dots,x_{n}$ can be chosen 
freely from $\{x,x'\}$.
These variables will be used to encode 
the variables $u_{1},\dots,u_{n}$ of the instance $\mathcal I$.
By adding new constraints 
(see 
$\mathcal C_{11},\mathcal C_{12},\mathcal C_{13}, \mathcal C_{14}$ below)
we can encode constraints 
of $\CSP(\Gamma_{0})$ on the variables 
$x_{1},\dots,x_{n}$. A formal construction is below.


\textbf{Construction}.
Let $u_{1},\dots,u_{n}$ be the variables of $\mathcal I$.

Choose variables $x,x',y',x_{1},\dots,x_{n},y_{1},\dots,y_{n},
z_{1},\dots,z_{n}$.
We define 14 sets of constraints:

\begin{itemize}
\item $\mathcal C_{1} = \{N(x,x_{i},y_{i})\mid i\in[n]\}$,

\item $\mathcal C_{2} = \{N(x',z_{i},y_{i})\mid i\in[n]\}$,

\item $\mathcal C_{3} = \{N(y',z_{i},x_{i})\mid i\in[n]\}$,

\item $\mathcal C_{4} = \{N(t_1,t_2,t_3)\mid 
t_{1},t_{2},t_{3}\in\{x,x',x_{1},\dots,x_{n}\}\}$,

\item $\mathcal C_{5} = \{N(t_1,t_2,t_3)\mid 
t_{1},t_{2},t_{3}\in\{x,y',y_{1},\dots,y_{n}\}\}$,

\item $\mathcal C_{6} = \{N(t_1,t_2,t_3)\mid 
t_{1},t_{2},t_{3}\in\{x',y',z_{1},\dots,z_{n}\}\}$,

\item $\mathcal C_{7} = \{N(x_{i},y_{j},z_{i})\mid 
i,j\in[n]\}$,

\item $\mathcal C_{8} = \{N(x_{i},y_{j},z_{j})\mid 
i,j\in[n]\}$,

\item $\mathcal C_{9} = \{N(x, x_{i},z_{i})\mid 
i\in[n]\}$,

\item $\mathcal C_{10} = \{N(x, y_{i},z_{i})\mid 
i\in[n]\}$,

\item $\mathcal C_{11} = \{N(x_{i}, x_{j},y_{k})\mid 
(u_{i}=u_{j}\vee u_{k} = 0)\in \mathcal I\}$,

\item $\mathcal C_{12} = \{N(y_{i}, y_{j},x_{k})\mid 
(u_{i}=u_{j}\vee u_{k} = 1)\in \mathcal I\}$,

\item $\mathcal C_{13} = \{x=x_{i}\mid 
(u_{i}=1)\in \mathcal I\}$,

\item $\mathcal C_{14} = \{x=y_{i}\mid 
(u_{i}=0)\in \mathcal I\}$.
\end{itemize}

\begin{lem}
$\mathcal I$ has a solution if and only if 
$\mathcal J$ has a surjective solution, where 
$\mathcal J = \mathcal C_{1}\cup\dots\cup \mathcal C_{14}$.
\end{lem}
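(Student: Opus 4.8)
The plan is to prove the two implications separately. Throughout I would use that $N$ is invariant under every permutation of $\{0,1,2\}$ and that every constraint of $\mathcal J$ is built from $N$ and $=$, so that up to renaming the three colours only $\mathcal C_{13},\mathcal C_{14}$ single out the variable $x$. For the forward implication, suppose $\mathcal I$ has a solution $\phi\colon\{u_1,\dots,u_n\}\to\{0,1\}$. I would write down the explicit surjective solution of $\mathcal J$ given by $x\mapsto 0$, $x'\mapsto 1$, $y'\mapsto 2$, and $(x_i,y_i,z_i)\mapsto(0,2,2)$ when $\phi(u_i)=1$, $(x_i,y_i,z_i)\mapsto(1,0,1)$ when $\phi(u_i)=0$. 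Surjectivity is immediate because $x,x',y'$ already realise all three colours, and checking $\mathcal C_1,\dots,\mathcal C_{14}$ is a finite case analysis: the families $\mathcal C_4,\mathcal C_5,\mathcal C_6$ hold since the three variable-sets realise only $\{0,1\}$, $\{0,2\}$, $\{1,2\}$; the families $\mathcal C_1,\mathcal C_2,\mathcal C_3,\mathcal C_7,\dots,\mathcal C_{10}$ hold state by state; and $\mathcal C_{11},\mathcal C_{12}$ hold exactly because $\phi$ satisfies the encoded clauses, while $\mathcal C_{13},\mathcal C_{14}$ agree with the chosen states.

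For the converse I would start from an arbitrary surjective solution of $\mathcal J$. From $\mathcal C_4,\mathcal C_5,\mathcal C_6$ each of the three sets $\{x,x',x_1,\dots,x_n\}$, $\{x,y',y_1,\dots,y_n\}$, $\{x',y',z_1,\dots,z_n\}$ realises at most two colours, since three distinct colours inside one set would be a forbidden rainbow. The crucial step is to show that $x,x',y'$ realise three distinct colours. I would argue by contradiction: if two (or all) of $x,x',y'$ coincide then, as these three sets cover all variables and each uses at most two colours, surjectivity forces the missing third colour into one of them; chasing that colour through the cross-constraints $\mathcal C_1,\mathcal C_2,\mathcal C_3,\mathcal C_7,\dots,\mathcal C_{10}$ then produces a rainbow triple. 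Having ruled this out, I may rename colours so that $x=0$, $x'=1$, $y'=2$, which forces $x_i\in\{0,1\}$, $y_i\in\{0,2\}$, $z_i\in\{1,2\}$.

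With this structure, analysing $\mathcal C_1,\mathcal C_2,\mathcal C_3,\mathcal C_9,\mathcal C_{10}$ for each index $i$ shows that the forbidden rainbows leave only the two possibilities $(x_i,y_i,z_i)=(1,0,1)$ and $(x_i,y_i,z_i)=(0,2,2)$, so each $i$ lies in exactly one of the two states and I may set $\phi(u_i)=0$ or $1$ accordingly. The constraints $\mathcal C_{13},\mathcal C_{14}$ ensure $\phi$ respects the unit constraints $u_i=1$ and $u_i=0$ of $\mathcal I$. Finally $\mathcal C_{11}$ forbids the rainbow $N(x_i,x_j,y_k)$ precisely when $u_i\ne u_j$ and $u_k=1$, and $\mathcal C_{12}$ forbids $N(y_i,y_j,x_k)$ precisely when $u_i\ne u_j$ and $u_k=0$; these are the negations of the clauses $u_i=u_j\vee u_k=0$ and $u_i=u_j\vee u_k=1$, so $\phi$ satisfies every clause of $\mathcal I$. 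The whole construction is plainly polynomial in $|\mathcal I|$.

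I expect the distinctness step for $x,x',y'$ to be the main obstacle. The constraints $\mathcal C_7,\dots,\mathcal C_{10}$ may look partly redundant, but they are exactly what excludes the degenerate colourings in which two of the special variables agree; the delicate part is organising the case analysis over which of $x,x',y'$ coincide and into which variable-set the missing colour is pushed, and verifying that each case yields a rainbow.
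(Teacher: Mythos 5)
Your overall strategy coincides with the paper's: an explicit forward assignment (yours is the paper's composed with the transposition $0\leftrightarrow1$), and for the converse a reduction to the case where $x,x',y'$ take three distinct values, after which the classification of each triple $(x_i,y_i,z_i)$ into exactly two admissible states and the reading-off of a solution of $\mathcal I$ via $\mathcal C_{11}$--$\mathcal C_{14}$ are correct as you describe them. I checked your two admissible states: under $x=0$, $x'=1$, $y'=2$ the constraints $\mathcal C_1,\mathcal C_2,\mathcal C_3$ alone already eliminate the six combinations other than $(0,2,2)$ and $(1,0,1)$, and the rainbow/clause correspondence for $\mathcal C_{11},\mathcal C_{12}$ is exactly as you state.

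The gap is the step you yourself flag: the proof that in every surjective solution $x,x',y'$ are pairwise distinct. This is not a routine verification that can be dispatched with ``chasing that colour produces a rainbow''; it is where essentially all of the content of the converse lives, and the paper devotes four separate case analyses to it (its Cases 2--5), distinguished by whether $y'$ carries the missing colour $c$ and, if not, whether $c$ sits in some $y_i$ (with $z_i=c$ or $z_i\ne c$) or only among the $z_i$. Each case is a specific deduction chain, not a generic rainbow argument. For instance, if $x=x'=a$ and $y'=c$, then $\mathcal C_5$ and $\mathcal C_6$ confine the $y$- and $z$-variables to $\{a,c\}$, surjectivity forces some $x_i=b$, and then $\mathcal C_9$ gives $z_i\ne c$ while $\mathcal C_3$ gives $z_i\ne a$, a contradiction; the remaining cases invoke $\mathcal C_7,\mathcal C_8,\mathcal C_{10}$ in similarly particular ways, so the constraints $\mathcal C_7$--$\mathcal C_{10}$ are not redundant --- each is needed in at least one branch. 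Your approach is the right one and the claim you need is true, but until these four deduction chains are actually written out and checked to close, the proof is incomplete at its central point.
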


\begin{proof}
Let us show both implications.

$\mathcal I\in\CSP(\Gamma_{0})\Rightarrow 
\mathcal J\in\SCSP(\{N,=\}).$

Let $(b_{1},\dots,b_{n})$ be a solution of $\mathcal I$.
Put 
$x = 1$, $x' = 0$, $y'=2$, 
$x_{i} = b_{i}$, $y_{i} = b_{i}+1$,
$z_{i} = 2\cdot b_{i}$ for every $i\in[n]$.
Let us check that all the constraints are satisfied.
\begin{itemize}
\item $\mathcal C_{1}$
holds because $|\{1,b_{i},b_{i}+1\}|<3$ for any $b_{i}\in\{0,1\}$.

\item $\mathcal C_{2}$
holds because 
$|\{0,2\cdot b_{i},b_{i}+1\}|<3$ for any $b_{i}\in\{0,1\}$.

\item $\mathcal C_{3}$
holds because 
$|\{2,2\cdot b_{i},b_{i}\}|<3$ for any $b_{i}\in\{0,1\}$.

\item $\mathcal C_{4}$
holds because 
$x,x',x_{1},\dots,x_{n}$ are from the set $\{0,1\}$.

\item $\mathcal C_{5}$
holds because 
$x,y',y_{1},\dots,y_{n}$ are from the set $\{1,2\}$.

\item $\mathcal C_{6}$
holds because 
$x',y',z_{1},\dots,z_{n}$ are from the set $\{0,2\}$.

\item $\mathcal C_{7}$
holds because $|\{b_{i},b_{j}+1,2\cdot b_{i}\}|<3$ for any $b_{i},b_{j}\in\{0,1\}$.

\item $\mathcal C_{8}$
holds because $|\{b_{i},b_{j}+1,2\cdot b_{j}\}|<3$ for any $b_{i},b_{j}\in\{0,1\}$.

\item $\mathcal C_{9}$
holds because $|\{1,b_{i},2\cdot b_{i}\}|<3$ for any $b_{i}\in\{0,1\}$.

\item $\mathcal C_{10}$
holds because $|\{1,b_{i}+1,2\cdot b_{i}\}|<3$ for any $b_{i}\in\{0,1\}$.

\item $\mathcal C_{11}$
holds because 
each constraint is equivalent 
to 
$|\{b_{i},b_{j},1+b_{k}\}|<3$ 
and 
equivalent to 
$(b_{i}=b_{j})\vee b_{k}=0$.

\item $\mathcal C_{12}$
holds because 
each constraint is equivalent 
to 
$|\{b_{i}+1,b_{j}+1,b_{k}\}|<3$ 
and 
equivalent to 
$(b_{i}=b_{j})\vee b_{k}=1$.

\item $\mathcal C_{13}$ holds because  
$x_{i} =x= 1$ whenever $b_{i} = 1$. 

\item $\mathcal C_{14}$ holds because  
$y_{i} =x=1$ whenever $b_{i} = 0$. 
\end{itemize}

$\mathcal J\in\SCSP(\{N,=\})\Rightarrow 
\mathcal I\in\CSP(\Gamma_{0}).$

Choose a surjective solution of $\mathcal J$. 
By $\mathcal C_{4}$ we can choose $a,b\in\{0,1,2\}$ such that 
$x=a$ and 
$\{x,x',x_{1},\dots,x_{n}\}\subseteq \{a,b\}$. 
Since the solution is surjective, there should be 
an element in the solution equal to $c\in\{0,1,2\}\setminus\{a,b\}$.
Consider 5 cases. 

\textbf{Case 1}. Assume that $y'=c$ and $x'=b$.
Note that by the definition of $N$ for any permutation $\sigma$ on $\{0,1,2\}$
$\sigma$ applied to a solution of an instance of $\SCSP(\{N\})$ gives a solution of the instance. Therefore, without loss of generality 
we may assime that 
$x=1,x' = 0, y' = 2$ in our solution.
By $\mathcal C_{4}$,
$\mathcal C_{5}$,
$\mathcal C_{6}$
we know that 
\begin{align*}
    \{x,x',x_{1},\dots,x_{n}\}&=\{0,1\},\\
    \{x,y',y_{1},\dots,y_{n}\}&=\{1,2\},\\
    \{x',y',z_{1},\dots,z_{n}\}&=\{0,2\}.
\end{align*}
Let us show that $y_{i} = x_{i}+1$.
If $x_{i} =0$ and $y_{i} = 2$, then we get a contradiction 
with $\mathcal C_{1}$.

If $x_{i} =y_{i} = 1$, then 
by $\mathcal C_{2}$ we have $z_{i}\neq 2$, 
by $\mathcal C_{3}$ we have $z_{i}\neq 0$,
which implies $z_{i}=1$ and contradicts $\mathcal C_6$.

Let us show that 
$(u_{1},\dots,u_{n}) = (x_{1},\dots,x_{n})$ is 
a solution of $\mathcal I$.

$\mathcal C_{11}$ guarantees that 
all the constraints of the form 
$(u_{i}=u_{j}\vee u_{k} = 0)$ hold, 
$\mathcal C_{12}$ guarantees that 
all the constraints of the form 
$(u_{i}=u_{j}\vee u_{k} = 1)$ hold, 
$\mathcal C_{13}$ guarantees that 
all the constraints of the form 
$u_{i}=1$ hold,
$\mathcal C_{14}$ guarantees that 
all the constraints of the form 
$u_{i}=0$ hold.
Thus, we proved that it is a solution.

\textbf{Case 2}. Assume that $y'=c$ and $x'=a$.
By $\mathcal C_{5}$
we have 
$\{x,y',y_1,\dots,y_n\}=\{a,c\}$.
By $\mathcal C_{6}$
we have 
$\{x',y',z_1,\dots,z_n\}=\{a,c\}$.
Since the solution is surjective, there should be $i$ 
such that $x_{i} = b$.
By $\mathcal C_{9}$ we have $z_{i}\neq c$,
by $\mathcal C_{3}$ we have $z_{i}\neq a$. Contradiction.

\textbf{Case 3}. Assume that $y'\neq c$ and $y_{i}=z_{i}=c$ for some $i$.
By $\mathcal C_{5}$
we have 
$\{x,y',y_1,\dots,y_n\}=\{a,c\}$, 
hence $y' = a$.
By $\mathcal C_{6}$
we have 
$\{x',y',z_1,\dots,z_n\}=\{a,c\}$, 
hence 
$x' = a$.
Since the solution is surjective, there should be $j$ 
such that $x_{j} = b$.
By $\mathcal C_{9}$, we have $z_{j} = a$.
By $\mathcal C_{7}$ 
we have $y_{\ell} = a$ for every $\ell$.
Contradiction.

\textbf{Case 4}. Assume that $y'\neq c$ and $y_{i}=c, z_{i}\neq c$ for some $i$.
By $\mathcal C_{5}$
we have 
$\{x,y',y_1,\dots,y_n\}=\{a,c\}$, 
hence $y' = a$.
By $\mathcal C_{1}$, we have 
$x_{i} = a$.
By $\mathcal C_{8}$ we obtain 
$z_{i}\in\{a,c\}$ and since $z_{i}\neq c$ we have
$z_{i} = a$.
Then by $\mathcal C_{8}$ we obtain 
$x_{\ell}=a$ for every $\ell$.
If $z_{j} = b$ for some $j$, then 
we get a contradiction with 
$\mathcal C_{7}$ applied to 
$(x_{j},y_{i},z_{j})$.
Since the solution is surjective, the only remaining option is 
$x'=b$, which contradicts 
$\mathcal C_{2}$.

\textbf{Case 5}. Assume that 
$\{x,y',y_1,\dots,y_n\}\subseteq\{a,b\}$.
Since the solution is surjective, 
there exists $i$ such that 
$z_{i} = c$.
By $\mathcal C_{9}$ and $\mathcal C_{10}$
we have $x_{i} = y_{i} = a$.
By $\mathcal C_{7}$ we have 
$y_{\ell} = a$ for every $\ell$,
by $\mathcal C_{8}$ we have 
$x_{\ell} = a$ for every $\ell$.
By $\mathcal C_{2}$ we have $x'\neq b$, 
by $\mathcal C_{3}$ we have $y'\neq b$.
Therefore $x'=y' = a$
and by $\mathcal C_{6}$
we obtain 
$\{x',y',z_1,\dots,z_n\}=\{a,c\}$.
Hence, none of the variables can be equal to $b$. Contradiction.
\end{proof}

\section{No-Rainbow problem is NP-Hard (the second proof)}\label{SecondProofSection}
Here we present another proof of the fact that 
the No-Rainbow problem is NP-Hard.
By $NAE_{3}$ we denote the ternary relation 
on $\{0,1\}$ consisting of all tuples 
but $(0,0,0)$ and $(1,1,1)$.
The idea is to encode an instance 
of $\CSP(\{NAE_{3}\})$ using the relation $N$. 
We assign a binary operation 
on $A = \{0,1,2\}$ to every variable of the 
instance of $\CSP(\{NAE_{3}\})$ and then encode each binary operation with nine variables on $\{0,1,2\}$.
Our plan is to write conditions that guarantee 
that every binary operation depends essentially only on one variable.
Later we interpret the dependence on the first variable as $0$ and 
the dependence on the second variable as $1$.

It is known from \cite{Schaefer} that 
$\CSP(\{NAE_{3}\})$ is NP-hard. 
Hence, to prove Theorem \ref{NoRainbowTHM}
it is sufficient to reduce $\CSP(\{NAE_{3}\})$
to $\SCSP(\{N\})$.
We start with a few auxiliary facts.
For a relation $\rho$ by $\Pol(\rho)$ we denote the set of all polymorphisms of $\rho$.
We say that an operation $f$
\emph{depends essentially only on one variable} if
$f(x_1,\dots,x_{n}) = g(x_{i})$ for some $i$ and a unary operation $g$, 
that is all variables but $i$-th are dummy.
We will need the following properties of $\Pol(N)$.

\begin{lem}[Section 5.2.6 in \cite{lau}]\label{SlupetskiClone}
Suppose $f\in \Pol(N)$ then 
\begin{enumerate}
    \item $f$ depends essentially only on one variable, or
    \item $|\Image(f)|<3$, that is, $f$ never returns some value
    $a\in A$.
\end{enumerate}
\end{lem}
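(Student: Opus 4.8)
The goal is the forward direction: assuming $f\in\Pol(N)$ is surjective, I must show it is essentially unary; equivalently, I want to identify $\Pol(N)$ with the \emph{S\l upecki clone} on $\{0,1,2\}$, the set $S$ of all operations that are either non-surjective or essentially unary. It is worth first recording the (easy) fact that $S\subseteq\Pol(N)$, since it will be reused: if $f(x_1,\dots,x_k)=g(x_i)$ is essentially unary and $v_1,\dots,v_k\in N$ are its column arguments, the output triple is the image under $g$ of the entries of the non-rainbow triple $v_i$, hence has at most two distinct values and lies in $N$; and if $f$ misses a value $a$, its output triples never contain $a$ and so cannot be rainbows. Thus both cases of the statement really do occur in $\Pol(N)$, and the content of the lemma is that nothing else does.

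A natural first step toward a direct proof, and a useful sanity check, is the following observation linking surjectivity to $N$-preservation. Choose tuples $\bar p^{0},\bar p^{1},\bar p^{2}\in\{0,1,2\}^{k}$ with $f(\bar p^{r})=r$ (possible since $f$ is surjective), and view them as the three rows of a $3\times k$ array. Each of its $k$ columns is a triple in $\{0,1,2\}^{3}$, and applying $f$ to these columns coordinatewise returns $(f(\bar p^{0}),f(\bar p^{1}),f(\bar p^{2}))=(0,1,2)$, a rainbow. As $f$ preserves $N$, the columns cannot all lie in $N$, so some coordinate $j$ has $\{\bar p^{0}_{j},\bar p^{1}_{j},\bar p^{2}_{j}\}=\{0,1,2\}$. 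Morally, $f$ must read off this one ``spreading'' coordinate, and it remains to show it ignores the others.

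To finish cleanly I would, rather than grind through this combinatorial analysis, invoke S\l upecki's theorem: for $|A|\ge 3$ the set $S$ above is a \emph{maximal} clone in the clone $O_A$ of all operations on $A$. Suppose for contradiction that some $f\in\Pol(N)$ were surjective and depended on at least two variables, i.e. $f\notin S$. By maximality, $S\cup\{f\}$ generates $O_A$. But $S\subseteq\Pol(N)$ by the paragraph above, $f\in\Pol(N)$ by hypothesis, and $\Pol(N)$ is a clone, so every operation on $\{0,1,2\}$ would preserve $N$. This is false: the ternary operation $g$ with $g(0,1,1)=0$, $g(1,0,1)=1$, $g(1,1,0)=2$ sends the columns $(0,1,1),(1,0,1),(1,1,0)\in N$, read coordinatewise, to $(0,1,2)\notin N$. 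The contradiction yields $\Pol(N)\subseteq S$, which is exactly the claimed dichotomy.

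The main obstacle is S\l upecki's maximality theorem itself, which is the genuinely hard ingredient; its proof reduces a surjective operation depending on two variables to a suitable binary operation and shows that this binary operation together with the unary operations builds every operation on a three-element set. Since the lemma is quoted from \cite{lau}, I would cite this result rather than reprove it. For a self-contained treatment one can instead complete the direct argument begun in the second paragraph: reduce the general case to binary polymorphisms and carry out a finite case analysis of surjective, non-essentially-unary $3\times 3$ tables, checking in each case that some pair of columns from $N$ is mapped to a rainbow.
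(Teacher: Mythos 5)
Your proposal is correct. Note, however, that the paper offers no proof of this lemma at all: it is quoted verbatim from Lau (Section 5.2.6), where it appears because $N$ is precisely the ternary "non-injective" relation whose polymorphism clone is, classically, the S\l upecki clone. So there is no in-paper argument to compare against; what you have done is supply a derivation from a different, also classical, cited ingredient. Your reduction is sound and non-circular: the easy inclusion $S\subseteq\Pol(N)$ is verified correctly (an essentially unary $f=g(x_i)$ outputs the $g$-image of a non-rainbow triple, hence a non-rainbow triple, and a non-surjective $f$ outputs triples omitting some value); $\Pol(N)$ is a clone, so if it contained any surjective operation essential in two or more variables it would contain the clone generated by $S$ together with that operation, which by S\l upecki's maximality theorem is all of $O_A$; and your witness $g$ with $(g(0,1,1),g(1,0,1),g(1,1,0))=(0,1,2)$ correctly shows $\Pol(N)\neq O_A$ (any total extension works). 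The one substantive ingredient you do not prove, S\l upecki's completeness criterion, is exactly the hard part, and deferring it to the literature is no worse than what the paper itself does --- indeed Lau's proof of the quoted statement is essentially this same maximality argument. Your second paragraph (surjectivity forces some coordinate to be "rainbow" on any triple of preimages of $0,1,2$) is a correct observation but is not actually used in the completed argument; if you wanted a fully self-contained proof you would have to push that line through the finite case analysis you sketch at the end, which is where the real work lies.
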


We will encode each binary operation $f$ on $A$ via 
9 variables on $A$, which we denote by
\begin{align*}
f(0,0), f(0,1),f(0,2),
f(1,0), &f(1,1),f(1,2),\\
&f(2,0), f(2,1),f(2,2).
\end{align*}
For a tuple $\alpha$ by $\alpha(i)$ we denote the $i$-th element of this tuple. 
If we just write the definition of the fact that 
$f$ is a polymorphism of $N$ we get the following lemma.

\begin{lem}\label{PolWrittenAsConj}
$f\in \Pol(N)$ if and only if 
$$\bigwedge_{\alpha,\beta\in N} 
    N(f(\alpha(1),\beta(1)),f(\alpha(2),\beta(2)),f(\alpha(3),\beta(3))).$$
\end{lem}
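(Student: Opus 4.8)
The plan is to observe that the claimed equivalence is nothing more than the definition of \emph{polymorphism}, given in the introduction, written out explicitly for the special case of a binary operation and a ternary relation. First I would recall that, by definition, the binary operation $f$ belongs to $\Pol(N)$ exactly when the following holds: for every pair of triples $\alpha=(\alpha(1),\alpha(2),\alpha(3))$ and $\beta=(\beta(1),\beta(2),\beta(3))$ lying in $N$, the triple obtained by applying $f$ coordinate-wise,
$$(f(\alpha(1),\beta(1)),\ f(\alpha(2),\beta(2)),\ f(\alpha(3),\beta(3))),$$
also lies in $N$. This is precisely the general polymorphism condition with arity $k=2$ (two input tuples, supplied as the two arguments of $f$) and relational arity $m=3$ (the three coordinates of $N$).

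Next I would note that, since $N$ is a finite relation, the universal quantification ``for every pair $\alpha,\beta\in N$'' is literally a finite conjunction indexed by $(\alpha,\beta)\in N\times N$, and that membership of the resulting triple in $N$ is exactly the atomic predicate $N(f(\alpha(1),\beta(1)),f(\alpha(2),\beta(2)),f(\alpha(3),\beta(3)))$. Combining these two observations reproduces the displayed conjunction verbatim, which establishes both implications of the stated equivalence at once.

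There is no genuine obstacle here; the only thing to keep straight is the bookkeeping of indices, namely that the two rows fed into the binary $f$ are indexed by $\alpha$ and $\beta$ (corresponding to the two arguments of $f$), while the three columns are indexed by the coordinate position $1,2,3$ of the ternary relation $N$. Once this matching of notation is made explicit, the lemma follows immediately as a direct unfolding of the definition of polymorphism.
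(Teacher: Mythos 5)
Your proposal is correct and matches the paper exactly: the paper gives no separate proof, remarking only that the lemma is obtained by ``just writing the definition'' of $f$ being a polymorphism of $N$, which is precisely your unfolding of the polymorphism condition for a binary operation and a ternary relation into a finite conjunction over $\alpha,\beta\in N$.
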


Thus, the condition $f\in\Pol(N)$ can be expressed as a 
conjunction of relations $N$. 

Now we are ready to prove the main theorem of this subsection.

\begin{thm}
$\CSP(\{NAE_{3}\})$ can be polynomially reduced to $\SCSP(\{N\})$.
\end{thm}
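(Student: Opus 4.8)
The plan is to realise the scheme sketched above: every variable of the input $\CSP(\{NAE_3\})$ instance is represented by a binary operation on $A=\{0,1,2\}$, encoded by its nine values as nine fresh variables, and the whole construction is carried out in $\SCSP(\{N,=\})$, which is equivalent to $\SCSP(\{N\})$ since equalities can be propagated out. First I would add, for each operation $f_v$, the conjunction of $N$-constraints from Lemma~\ref{PolWrittenAsConj}, forcing $f_v\in\Pol(N)$. I would also identify the diagonal entries across all operations by equalities, $f_v(a,a)=f_w(a,a)$ for all $v,w$ and all $a$, so that every operation shares one common unary diagonal map $g\colon a\mapsto f(a,a)$. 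These equality constraints are harmless and make the later index read-off consistent between different operations.

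The heart of the argument is to force each $f_v$ to depend essentially on a single coordinate. By Lemma~\ref{SlupetskiClone}, a member of $\Pol(N)$ is either essentially unary or has image of size less than $3$; the role of surjectivity is to rule out the second alternative. I would add a further block of $N$-constraints that funnels every coordinate-variable through the common diagonal, so that whenever some operation has image of size at most $2$ the entire assignment is confined to at most two of the three domain values, contradicting surjectivity. Consequently each $f_v$ is surjective, hence essentially unary with a \emph{bijective} $g$, so $f_v(x,y)=g(x_{i_v})$ for a well-defined index $i_v\in\{1,2\}$. Reading $i_v=1$ as the Boolean value $0$ and $i_v=2$ as the Boolean value $1$ turns each surjective solution into a Boolean assignment of the original variables; conversely every Boolean assignment yields a permuted-projection solution. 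This is where the permutation-invariance of $N$ is used: a solution is pinned down only up to a global permutation of $A$, which is exactly the freedom needed to realise an arbitrary bijective $g$.

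It remains to encode each constraint $NAE_3(v_1,v_2,v_3)$ by a gadget of $N$-constraints that is satisfiable in a surjective solution precisely when the indices $i_{v_1},i_{v_2},i_{v_3}$ are not all equal. The subtlety is that the natural read-off of an index at a point $(s,t)$ with $s\neq t$ produces a value in the two-element set $\{g(s),g(t)\}$, on which $N$ is vacuously true; so the gadget must route the two forbidden patterns (all indices $1$ and all indices $2$) through the \emph{third} domain value $g(u)$ to create a rainbow exactly in those two cases, in the same spirit as the constraints $\mathcal C_{11},\mathcal C_{12}$ of the first proof. I would add auxiliary variables and $N$-constraints so that each all-equal pattern forces some triple to take all three values, while no genuinely non-equal pattern does.

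Finally I would verify the two implications: a satisfying assignment of the $NAE_3$ instance extends (after choosing $g=\mathrm{id}$) to a surjective solution, and any surjective solution yields, via the indices $i_v$, a satisfying assignment; since $\CSP(\{NAE_3\})$ is NP-hard \cite{Schaefer}, this establishes Theorem~\ref{NoRainbowTHM}. The main obstacle, as in the first proof, is the design of the funnelling gadget that makes global surjectivity force per-operation surjectivity, together with the $NAE$-gadget: both must respect that $N$ is invariant under all permutations of $A$ and is powerless to constrain values lying inside a single two-element subset, so all the real work is pushed onto detecting when the third value is, or is forced to be, absent.
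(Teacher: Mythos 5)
Your plan follows the same route as the paper's second proof: variables become binary operations on $A$ encoded by their nine values, membership in $\Pol(N)$ is imposed via Lemma~\ref{PolWrittenAsConj}, the diagonals are identified to a common $g$, and Lemma~\ref{SlupetskiClone} together with global surjectivity is meant to force each operation to be essentially unary with bijective $g$, so that the essential coordinate encodes a Boolean value. The gap is that the two gadgets on which everything rests are only described by their intended effect, never constructed: ``I would add a further block of $N$-constraints that funnels\dots'' and ``I would add auxiliary variables and $N$-constraints so that\dots'' are exactly the steps where the proof lives, and neither is exhibited nor verified in either direction. In particular, the case analysis in the backward direction is on $|\Image(g)|\in\{1,2,3\}$; the case $|\Image(g)|=2$ already falls to the diagonal equalities (every essentially unary $f_i$ then has image $\Image(g)$, killing surjectivity), and the only case that needs the funnelling block is $|\Image(g)|=1$, where the paper's constraints $N(f_i(a,b),f_i(c,b),f_j(a,c))$ and $N(f_i(b,a),f_i(b,c),f_j(a,c))$ require a genuinely nontrivial chain of deductions to reach a contradiction. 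Checking that these same constraints are satisfied by projections in the forward direction is also not automatic.

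Your sketch of the $NAE_3$ gadget additionally suggests you are looking for something more complicated than what is needed, and in a direction that may not converge. No auxiliary variables are required: the paper encodes $NAE_3(x_i,x_j,x_k)$ by the single constraint $N\bigl(f_i(0,1),f_j(1,2),f_k(2,0)\bigr)$. The point is that the three read-off positions are \emph{distinct} and cyclically arranged, so the three values lie in $\{g(0),g(1)\}$, $\{g(1),g(2)\}$, $\{g(2),g(0)\}$ respectively; if all three operations select the same coordinate the triple is $(g(0),g(1),g(2))$ or $(g(1),g(2),g(0))$, a rainbow when $g$ is a bijection, while any mixed selection repeats a value and satisfies $N$. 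Your stated obstacle --- that $N$ is powerless on values confined to a two-element set --- dissolves once the read-off points differ, so the ``routing through the third domain value'' machinery you propose is unnecessary. As written, the proposal is a correct outline of the paper's strategy but not a proof; the missing content is precisely the explicit constraint sets and their verification.
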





Let us show how to encode $\CSP(\{NAE_{3}\})$ as 
$\SCSP(\{N\})$.
Consider an instance $\mathcal I$ of $\CSP(\{NAE_{3}\})$.
Let $x_{1},\dots,x_{n}$ be the variables 
of $\mathcal I$
and $T$ be the set of all triples 
$(i,j,k)$ such that 
$NAE_{3}(x_{i},x_{j},x_{k})$ appears in the instance.

As we mentioned earlier, we assign a binary operation $f_{i}$ on $A$ to each variable $x_{i}$, then we encode 
each operation with nine variables on $A$, which we denote by 
$f_{i}(0,0),f_{i}(0,1),f_{i}(0,2),$ 
$f_{i}(1,0)$, $f_{i}(1,1),$ $f_{i}(1,2),$ 
$f_{i}(2,0),$ $f_{i}(2,1),$ $f_{i}(2,2)$.

We want $f_{i}$ to depend only on the first variable
if $x_{i}=0$ and only on the second variable if $x_{i}=1$.
By $\mathcal I'$ we denote the following instance:
\begin{align*}
&\bigwedge\limits_{i\in [n]}(f_{i}\in\Pol(N))
\bigwedge\limits_{i\in [n], a\in A}(f_{1}(a,a)=f_{i}(a,a))\\
&\bigwedge\limits_{(i,j,k)\in T}
N(f_{i}(0,1),f_{j}(1,2),f_{k}(2,0))\\
 &\bigwedge\limits_{i,j\in[n],a,b,c\in A}N(f_{i}(a,b),f_{i}(c,b),f_{j}(a,c))\\
  &\;\;\;\;\;\bigwedge\limits_{i,j\in[n],a,b,c\in A}N(f_{i}(b,a),f_{i}(b,c),f_{j}(a,c))
\end{align*}
Note that by Lemma \ref{PolWrittenAsConj} the first conjunction 
can be written as a conjunction of relations $N$.
Hence, this instance can be viewed as an instance of 
$\SCSP(\{N\})$ because we use only equalities and the relation $N$
(the equalities can be propagated out).

\begin{lem}
$\mathcal I$ has a solution if and only if 
$\mathcal I'$ has a surjective solution.
\end{lem}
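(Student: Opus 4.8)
The plan is to establish the two implications separately; the forward direction is an explicit construction, while the backward direction requires analysing the structure of binary polymorphisms of $N$.

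First I would treat the easy implication. Given a solution $(b_1,\dots,b_n)\in\{0,1\}^n$ of $\mathcal I$, I set $f_i$ to be the first projection $(x,y)\mapsto x$ when $b_i=0$ and the second projection $(x,y)\mapsto y$ when $b_i=1$. Projections preserve every relation, so $f_i\in\Pol(N)$; every projection has the identity diagonal, so the equalities $f_1(a,a)=f_i(a,a)$ hold; and the two families $N(f_i(a,b),f_i(c,b),f_j(a,c))$ and $N(f_i(b,a),f_i(b,c),f_j(a,c))$ hold because for a projection two of the three arguments always coincide, so the triple is never a rainbow. For the constraints $N(f_i(0,1),f_j(1,2),f_k(2,0))$, a direct computation of $(f_i(0,1),f_j(1,2),f_k(2,0))$ over the six $NAE_3$-triples shows this triple is a rainbow exactly when $(b_i,b_j,b_k)\in\{(0,0,0),(1,1,1)\}$, which $NAE_3$ forbids. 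Finally the diagonal values $f_i(0,0)=0$, $f_i(1,1)=1$, $f_i(2,2)=2$ already exhaust $A$, so the assignment is surjective.

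For the converse, take a surjective solution; the equalities force a common diagonal $d(a):=f_i(a,a)$, independent of $i$. By Lemma~\ref{SlupetskiClone} each $f_i$ either depends essentially on one variable, in which case matching the diagonal gives $f_i(x,y)\in\{d(x),d(y)\}$, or has image of size $<3$. The whole argument reduces to showing that $d$ is a bijection. I first rule out $|\Image(d)|=2$: if $d$ misses a value $v$, surjectivity yields some $f_i(a,b)=v$ with $a\neq b$, and then $\Image(f_i)\supseteq\Image(d)\cup\{v\}=A$ forces full image, hence (by the Lemma) essential unarity, hence $\Image(f_i)=\Image(d)\not\ni v$, a contradiction.

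The main obstacle is excluding the constant case $d\equiv w$. Here no $f_i$ can be essentially unary (that would make it constant $w$), so every $f_i$ has image of size at most $2$ containing $w$; to realise the remaining values $p,q$ by surjectivity I need operations $f_s$ with image $\{w,p\}$ and $f_t$ with image $\{w,q\}$. Writing $P=f_s^{-1}(p)$ and $Q=f_t^{-1}(q)$ for the off-diagonal preimages, the constraints $N(f_s(a,b),f_s(c,b),f_t(a,c))$ and $N(f_s(b,a),f_s(b,c),f_t(a,c))$, together with the two obtained by swapping $s,t$, translate exactly into: whenever $(a,c)\in Q$ the rows $a,c$ and columns $a,c$ of $P$ coincide, and symmetrically with $P,Q$ interchanged. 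Reading $P,Q$ as loopless digraphs on $A$, this says every edge of one makes its endpoints \emph{twins} (equal in- and out-neighbourhoods) in the other. I then expect a short argument to show two nonempty such digraphs cannot coexist on three vertices: an edge $(u,v)\in P$ makes $u,v$ twins in $Q$, which confines the edges of $Q$ to the pairs $\{u,w\}$ and $\{v,w\}$; tracing such an edge back through the twin condition in $P$ produces a $P$-edge on a pair already forced to be $P$-nonadjacent, a contradiction. Hence one of $P,Q$ is empty, contradicting surjectivity, so $|\Image(d)|=1$ is impossible. Therefore $d$ is a bijection, every $f_i$ has full image and is essentially unary, so $f_i(x,y)=d(x)$ or $f_i(x,y)=d(y)$. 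Since $N$ is invariant under all permutations of $A$, replacing each value $v$ by $d^{-1}(v)$ yields a solution with identity diagonal in which each $f_i$ is the first or second projection; setting $b_i=0$ or $b_i=1$ accordingly and invoking the forward-direction computation on the transformed constraints $N(f_i(0,1),f_j(1,2),f_k(2,0))$ shows $(b_i,b_j,b_k)\notin\{(0,0,0),(1,1,1)\}$, so $(b_1,\dots,b_n)$ satisfies $\mathcal I$.
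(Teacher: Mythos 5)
Your proof is correct and follows the same overall skeleton as the paper's: the forward direction via projections, and the backward direction via the common diagonal $d$ (the paper's $g$), Lemma~\ref{SlupetskiClone}, and a case split on $|\Image(d)|$. The cases $|\Image(d)|=3$ and $|\Image(d)|=2$ are handled exactly as in the paper. The one genuine divergence is the constant-diagonal case. The paper picks, ``without loss of generality,'' indices and elements $a,b,c$ with $\{d,f_i(a,b),f_j(a,c)\}=A$ and then chases three instances of the fourth and fifth conjunctions to manufacture a forbidden rainbow. You instead encode the $p$-positions of $f_s$ and the $q$-positions of $f_t$ as two loopless digraphs $P,Q$ on $A$, observe that the conjunctions say every edge of one makes its endpoints twins (hence non-adjacent) in the other, and show two nonempty such digraphs cannot coexist on three vertices. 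Your sketch of that last step does close up: an edge $(u,v)\in P$ forces $Q$'s edges onto pairs through the third vertex $t$, any such $Q$-edge makes $t$ a twin of $u$ (and of $v$) in $P$, and then $(u,v)\in P$ propagates to a $P$-edge between $t$ and $v$, which twinness forbids. What your route buys is that it avoids the paper's slightly delicate ``without loss of generality'' alignment of the shared first argument $a$ (which on a three-element domain needs the observation that two off-diagonal positions always share a coordinate, plus the symmetry between the fourth and fifth conjunctions); what it costs is a small amount of extra machinery and, as a cosmetic matter, a notational clash where $w$ names both the constant diagonal value and, later, the third vertex. The hedged phrase ``I expect a short argument'' should be replaced by the two-line verification above, but there is no gap.
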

\begin{proof}

$\mathcal I\Rightarrow \mathcal I'$.
Suppose we have a solution 
$(x_{1},\dots,x_{n})$ of $\mathcal I$. To get a surjective solution of $\mathcal I'$ it is sufficient to put
$f_{i}(x,y) = x$ if $x_{i} =0$ and
$f_{i}(x,y) = y$ if $x_{i} =1$.
Since $f_{i}(0,0) = 0$, 
$f_{i}(1,1) = 1$,
and
$f_{i}(2,2) = 2$ the solution is surjective.
Let us prove that all the above conjunctions hold.
The first conjunction holds since projections preserve
$N$. The second conjunction is trivial. The third conjunction holds because 
$(x_{1},\dots,x_{n})$ is a solution of $\mathcal I$
and therefore the operations $f_{i},f_{j},f_{k}$ cannot depend 
on the same variables. 
Let us check the fourth conjunction.
If $x_{i}=0$ then $f_{i}(a,b),f_{i}(c,b),
f_{j}(a,c)\in\{a,c\}$,
if $x_{i}=1$ then $f_{i}(a,b),f_{i}(c,b),
f_{j}(a,c)\in\{b,f_{j}(a,c)\}$.
The remaining conjunction can be verified in the same way. Thus, we get a surjective solution of $\mathcal I'$.

$\mathcal I'\Rightarrow \mathcal I$.
Suppose we have a surjective solution 
$(f_{1},\dots,f_{n})$ of $\mathcal I'$.
Let $g(x) := f_{1}(x,x)$.
By the second conjunction we have 
$f_{i}(x,x) = g(x)$ for every $i\in[n]$.

Assume that $|\Image(g)|=3$. By the first conjunction and Lemma \ref{SlupetskiClone} 
each $f_{i}$ depends essentially only on one variable.  
Assign $x_{i}=0$ if $f_{i}$ depends essentially on the first variable, 
and $x_{i}=1$ if $f_{i}$ depends essentially on the second variable. 
The third conjunction guarantees that 
$f_{i}, f_{j},$ and $f_{k}$ cannot depend on the same coordinate for each $(i,j,k)\in T$.
Indeed, if three operations $f_{i}, f_{j}, f_{k}$ 
depend only on the first variables then 
we get 
$$N(f_{i}(0,1),f_{j}(1,2),f_{k}(2,0)) = N(g(0),g(1),g(2)),$$ which does not hold because $g$ is a bijection. Thus, 
for each $(i,j,k)\in T$
we have $NAE_{3}(x_{i},x_{j},x_{k})$.
Hence, we defined a solution of $\mathcal I$.

Assume that $|\Image(g)|=2$. 
If $|\Image(f_{i})| =3$ for some $i$ then 
by Lemma \ref{SlupetskiClone} 
$f_{i}$ depends essentially only on one variable.
This means that 
$\Image(f_{i}) = \Image(g)$ for every $i\in[n]$, which contradicts the surjectivity of 
the solution.

Assume that $\Image(g)=\{d\}$. 
Assume that $|\Image(f_{k})|=3$ for some $k\in[n]$.
Then by Lemma~\ref{SlupetskiClone} $f_{k}$ depends essentially only on one variable and $\Image(f_{k})=\Image(g) = \{d\}$, which gives us a contradiction.
Thus, $|\Image(f_{k})|<3$ for every $k\in[n]$.
Since the solution is surjective, the remaining 
two values from $A\setminus\{d\}$ should appear in the images of some functions $f_{i}$ and $f_{j}$.
To get a contradiction we use the last two conjunctions. 
The fifth conjunction is obtained by a permutation of variables in 
$f_{i}$
from the fourth conjunction.
Therefore, without loss of generality
we consider 
$f_{i},f_{j}$ and $a,b,c\in A$ such that 
$\{d,f_{i}(a,b),f_{j}(a,c)\}=A$.
By the fourth conjunction we have 
$N(f_{i}(a,b),f_{i}(c,b),f_{j}(a,c))$, 
which together with $\Image(f_{i}) = \{f_{i}(a,b),d\}$
implies 
$f_{i}(c,b)= f_{i}(a,b)$. 
Then by the fifth conjunction we get 
$N(f_{j}(a,c), f_{j}(a,b), f_{i}(c,b))$
which together with $\Image(f_{j}) = \{f_{j}(a,c),d\}$
implies 
$f_{j}(a,b) = f_{j}(a,c)$.
Then by the fourth conjunction we have 
$(f_{i}(a,b),f_{i}(b,b),f_{j}(a,b))
= (f_{i}(a,b),d,f_{j}(a,c)) \in N$, which contradicts our assumption.
\end{proof}

\section{Counter-example}\label{CounterExampleSection}

Here we will prove Theorem \ref{CEthm} from Section 
\ref{MainResultsSection}.

Recall that
$R = 
\begin{pmatrix}
0&0&0&0&0&0\\
2&2&2&1&1&1\\
2&1&1&1&1&1\\
1&2&2&1&1&1\\
0&1&2&0&1&2
\end{pmatrix}$,
$R'=\begin{pmatrix}
2&2&1\\
2&1&1\\
1&2&1
\end{pmatrix}$.

\begin{lem}
$\CSP(\{R,\{0\},\{1\},\{2\}\})$ is NP-hard.
\end{lem}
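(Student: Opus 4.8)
The plan is to invoke Theorem~\ref{FederVardiConj}: since the singleton relations $\{0\},\{1\},\{2\}$ belong to the language, it suffices to show that $\Gamma=\{R,\{0\},\{1\},\{2\}\}$ has no WNU polymorphism. Every polymorphism of $\Gamma$ preserves the singletons and is therefore idempotent, so what I really need is to rule out an idempotent WNU polymorphism of $R$.

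First I would localize the problem to the two-element subdomain $\{1,2\}$. The unary relation $\{1,2\}$ is the projection of $R$ onto its second coordinate (equivalently the third or fourth), since the values occurring there are exactly $\{2,2,2,1,1,1\}=\{1,2\}$; being pp-definable from $R$, it is preserved by every polymorphism of $R$. Hence $\{1,2\}$ is a subuniverse, and the restriction $w|_{\{1,2\}}$ of any idempotent WNU polymorphism $w$ of $R$ is a well-defined idempotent WNU operation on $\{1,2\}$. As already observed in the excerpt, $w$ also preserves $R'$, and since $R'\subseteq\{1,2\}^3$ the restriction $w|_{\{1,2\}}$ preserves $R'$ as well.

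Next I would argue on the two-element domain. Renaming $2\mapsto 0$ and $1\mapsto 1$ turns $R'$ into the Boolean relation $\{(0,0,1),(0,1,0),(1,1,1)\}$. By the Boolean part of Post's lattice, any idempotent clone containing a WNU operation (and a WNU operation is never a projection) must contain one of the four minimal idempotent operations: $\wedge$, $\vee$, the ternary majority, or the ternary minority. So it is enough to check that none of these four preserves $\{(0,0,1),(0,1,0),(1,1,1)\}$, which is the concrete core of the proof: $\wedge$ maps $(0,0,1),(0,1,0)$ to $(0,0,0)$; $\vee$ maps them to $(0,1,1)$; the majority of the three tuples is $(0,1,1)$; and the minority of the three tuples is $(1,0,0)$; none of these lies in the relation. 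Consequently $R'$ restricted to $\{1,2\}$, and hence $R$, admits no idempotent WNU, so $\Gamma$ admits no WNU polymorphism and $\CSP(\Gamma)$ is NP-complete by Theorem~\ref{FederVardiConj}.

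The main obstacle, and the step requiring the most care, is the passage to the two-element domain together with the appeal to the four minimal Boolean operations: I must confirm that $\{1,2\}$ is genuinely closed under every polymorphism (via its pp-definition as a projection of $R$) so that the restriction is well defined, and I must justify that excluding $\wedge$, $\vee$, majority, and minority already excludes every idempotent WNU. An alternative route would sidestep Post's lattice by directly pp-defining from $R$ (with constants) a relation already known to make the CSP with constants NP-hard, for instance aiming at $N$ or at the Boolean relation above; but the polymorphism argument sketched here is shorter and self-contained.
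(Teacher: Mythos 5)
Your proposal is correct and follows essentially the same route as the paper: project $R$ onto coordinates $2,3,4$ to get $R'$, observe that every polymorphism of the language is idempotent and preserves the subdomain $\{1,2\}$, check that $R'$ admits no idempotent WNU on $\{1,2\}$, and conclude via Theorem~\ref{FederVardiConj}. The only difference is that the paper states ``we can check that $R'$ is not preserved by any idempotent WNU on $\{1,2\}$'' without detail, whereas you supply that verification explicitly via the four minimal idempotent Boolean operations, which is a valid and welcome elaboration.
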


\begin{proof}
Note that 
$R'$ is the projection of $R$ onto the coordinates 2, 3, and 4.
We can check that 
$R'$ is not preserved by any idempotent WNU on $\{1,2\}$.
By 
Theorem \ref{FederVardiConj},
$\CSP(\{R',\{1\},\{2\}\})$ is NP-hard.
Therefore, 
$\CSP(\{R,\{0\},\{1\},\{2\}\})$ is NP-hard.
\end{proof}

\begin{lem}
$\SCSP(\{R'\})$ is NP-hard.
\end{lem}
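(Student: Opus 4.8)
The plan is to give a polynomial reduction from $\CSP(\{R',\{1\},\{2\}\})$, which is NP-hard by the previous lemma, to $\SCSP(\{R'\})$. The starting observation is that every entry of $R'$ lies in $\{1,2\}$, so in any solution a variable occurring in some constraint is forced into $\{1,2\}$; the value $0$ can therefore be realised only by variables occurring in no constraint. I would exploit this by introducing a single unconstrained variable $v_0$: in a surjective solution it is the only variable that can take the value $0$, so surjectivity forces $v_0=0$. This \emph{uses up} the demand for $0$ and leaves only the demand that the value $2$ also occur, necessarily among the constrained variables.

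Next I would introduce a distinguished variable $s$ and arrange that the appearance of the value $2$ anywhere among the constrained variables forces $s=2$, so that $s$ can play the role of the constant $2$. The engine is the gadget $R'(s,u,z)$ with a fresh $z$: inspecting the three tuples of $R'$, this constraint enforces both $u=2\Rightarrow s=2$ and $z=2\Rightarrow s=2$, while placing no further restriction on $u$ (once $s=2$ one may always set $z$ to the element of $\{1,2\}$ not taken by $u$). Adding such a gadget for each variable makes every constrained variable imply $s$, and crucially the fresh variable $z$ is itself covered, so no further gadget is needed for it and there is no regress. The constant $1$ is even easier: $R'(s,s,w)$ forces $w=1$ regardless of the value of $s$, since the only tuples of $R'$ with equal first two coordinates are $(2,2,1)$ and $(1,1,1)$. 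Finally, wherever the source instance demands a variable equal to $2$, I would simply substitute $s$ for that variable.

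With these pieces the two directions are routine. For completeness, a solution of the source instance extends to a surjective solution by setting $v_0=0$, $s=2$, copying the values of the remaining variables (constant-$2$ ones becoming $s=2$), and choosing each $z$ accordingly; since $0,1,2$ all occur, the solution is surjective. For soundness, in any surjective solution $v_0=0$ is forced, so the value $2$ must occur among the constrained variables; as every such variable implies $s$, we get $s=2$, the constraints $R'(s,s,\cdot)$ pin the intended variables to $1$, the substituted variables equal $s=2$, and the translated constraints are exactly those of the source instance restricted to $\{1,2\}$.

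The step I expect to be the main obstacle is guaranteeing that the value $2$ cannot be produced \emph{for free} without forcing $s=2$: this requires that every variable capable of taking the value $2$ imply $s$, including all auxiliary variables, and that the sole source of the value $0$ be pinned so that it cannot double as a witness for $2$. Both are resolved by the two design choices above — the single unconstrained variable $v_0$, which surjectivity forces to $0$, and the self-covering implication gadget $R'(s,u,z)$, whose fresh variable already implies $s$ — and the heart of the argument is checking that these choices leave the behaviour of the original variables completely unconstrained.
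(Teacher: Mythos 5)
Your proof is correct, but it takes a genuinely more self-contained route than the paper. The paper's argument is essentially a one-liner: since every coordinate of $R'$ is contained in $\{1,2\}$, every constrained variable is forced into $\{1,2\}$, so $\SCSP(\{R'\})$ over $\{0,1,2\}$ is just the surjective CSP of $R'$ over the two-element domain $\{1,2\}$ with one dummy variable added to absorb the value $0$; NP-hardness then follows by citing the known classification of the surjective CSP on two-element domains, using the fact (from the preceding lemma) that $R'$ admits no idempotent WNU polymorphism on $\{1,2\}$. You share the dummy-variable observation, but instead of invoking the two-element classification you re-prove the relevant instance of it via an explicit reduction from $\CSP(\{R',\{1\},\{2\}\})$. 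Your gadget claims check out against the actual tuples $R'=\{(2,2,1),(2,1,2),(1,1,1)\}$: the only tuple with second coordinate $2$ is $(2,2,1)$ and the only one with third coordinate $2$ is $(2,1,2)$, so $R'(s,u,z)$ indeed enforces $u=2\Rightarrow s=2$ and $z=2\Rightarrow s=2$ while leaving $u$ free once $s=2$; and the two tuples with equal first coordinates both end in $1$, so $R'(s,s,w)$ forces $w=1$. Since the fresh variables $z$ are themselves covered by the implication gadget, surjectivity (after $v_0$ absorbs $0$) really does force $s=2$, which lets $s$ simulate the constant $2$ and $R'(s,s,\cdot)$ simulate the constant $1$; both directions of the reduction then go through. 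What your approach buys is independence from the two-element classification theorem and a completely explicit construction; what it costs is length, and it exploits ad hoc features of this particular $R'$ rather than a general principle.
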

\begin{proof}
$\SCSP(\{R'\})$ can be viewed  as the SCSP on the 
two-element set $\{1,2\}$ (we just add a variable that never appears for 0 to be in a solution), which is known to be 
NP-Hard \cite{SCSPforTwo1997,SCSPforTwoBook}.
\end{proof}

\begin{lem}
$\SCSP(\{R\})$ is solvable in polynomial time.
\end{lem}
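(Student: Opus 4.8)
The plan rests on two structural observations about $R$. First, the map $\pi\colon\{0,1,2\}\to\{0,1\}$ that sends $2$ to $1$ and fixes $0$ and $1$ is an endomorphism of $R$: checking the six columns, $\pi$ maps each of them into $\{(0,1,1,1,0),(0,1,1,1,1)\}\subseteq R$. Hence every solution of an instance can be retracted, coordinate-wise, to a solution using only the values $0$ and $1$; in particular the instance is satisfiable if and only if it has a $\{0,1\}$-solution, and the latter is trivial to test, since a $\{0,1\}$-solution forces every variable occurring in coordinate $1$ to $0$ and every variable occurring in coordinates $2,3,4$ to $1$, so one only checks these unary demands for consistency. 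Second, every column of $R$ contains both $0$ and $1$. Consequently, as soon as the instance has at least one constraint, both values $0$ and $1$ are realised by any solution, so a solution is surjective if and only if it additionally realises the value $2$. (Instances with no constraints are handled directly.)

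It remains to decide whether the instance admits a solution in which some variable takes the value $2$. The only columns of $R$ avoiding $2$ are $(0,1,1,1,0)$ and $(0,1,1,1,1)$, which is exactly the $\{0,1\}$-pattern above; therefore a solution realises $2$ if and only if at least one constraint is satisfied by one of the four columns of $R$ containing a $2$. So I would first test plain satisfiability as above; if it fails, reject. Otherwise the task reduces to detecting whether value $2$ lies in the solution-set projection onto some variable, i.e.\ whether $2\in V(v)$ for some $v$, where $V(v)=\{a:\text{some solution has }v=a\}$.

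A tempting approach, namely to pick a variable, add the constraint $v=2$, and solve the resulting CSP, does \emph{not} work: $\CSP(\{R,\{2\}\})$ is itself NP-hard. Indeed, forcing coordinate $3$ to $2$ forces coordinate $4$ to $1$ via the column $(0,2,2,1,0)$, so the single constant $2$ already simulates the constant $1$; together with the projection $R'$ onto coordinates $2,3,4$ this yields $\CSP(\{R',\{1\},\{2\}\})$, which is NP-hard by Theorem~\ref{FederVardiConj}. The reason the \emph{surjective} problem can nonetheless be easy is that $\pi$ shows the value $2$ can never be forced at a prescribed variable, whereas for surjectivity we are free to choose \emph{where} the single required $2$ is placed. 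Accordingly I would decide whether $2\in\bigcup_v V(v)$ by a propagation/closure computation: run a generalised-arc-consistency-style closure that, for each variable, maintains its still-possible values and tracks the deterministic consequences of committing a $2$ (committing a $2$ in coordinate $2$, $3$, $4$ or $5$ of a constraint forces the remaining entries into one of a few explicit patterns, e.g.\ coordinate $3{=}2$ forces the entire column $(0,2,2,1,0)$). Accept if and only if, for some constraint and some $2$-containing column, this closure terminates without conflict.

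The main obstacle is the \textbf{completeness} of this closure: proving both that a conflict-free run certifies a genuine global solution carrying a $2$ (soundness) and that whenever some solution carries a $2$ the closure succeeds for at least one choice (completeness). This is precisely the step that must reconcile tractability with the NP-hardness of $\CSP(\{R,\{2\}\})$. Concretely, one must show that in a constant-free instance the cascade triggered by a single committed $2$, which propagates further $2$'s along coordinate $2$ and forced $1$'s along coordinates $3$ and $4$, can never assemble the $R'$-hard core, because that core requires two independent external constants $1$ and $2$ that a constant-free instance cannot supply. I expect this structural analysis of how a committed $2$ propagates through $R$, together with the verification that the only obstructions are local and hence detected by the closure, to be the technical heart of the proof; the surrounding bookkeeping is routine.
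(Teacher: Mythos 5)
Your preliminary reductions are correct and even elegant: the retraction $\pi$ sending $2\mapsto 1$ is indeed an endomorphism of $R$, so plain satisfiability collapses to checking the unary demands forced by the $\{0,1\}$-part of $R$ (coordinate $1$ must be $0$, coordinates $2,3,4$ must be $1$); and since every tuple of $R$ contains both $0$ and $1$, surjectivity of a nonempty instance reduces exactly to the question of whether some solution realises the value $2$. Your observation that one cannot simply branch on a variable and impose the constant $\{2\}$, because $\CSP(\{R,\{2\}\})$ is NP-hard, is also correct and identifies the real difficulty. But at precisely that point the proposal stops: the ``propagation/closure'' procedure you describe for deciding whether $2$ is attainable is never specified precisely (committing a $2$ at coordinates $2$, $4$ or $5$ does \emph{not} determine the rest of the tuple, so the cascade is a search, not a deterministic propagation), and you explicitly defer both its soundness and its completeness as ``the technical heart of the proof.'' Since that is the only step separating the tractable surjective problem from the NP-hard constant-augmented CSP, the proof is not complete as written; what remains unproven is the entire content of the lemma.

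For comparison, the paper closes this gap by a purely syntactic device rather than by propagation. It classifies each \emph{occurrence} of a variable by the projection of its constraint onto that position (type $\{0\}$ for coordinate $1$, a subset of $\{1,2\}$ for coordinates $2$--$4$, and the full domain for coordinate $5$), and shows that whenever a variable occurs in two different types, the corresponding $R$-constraint can be rewritten as a conjunction of the simpler relations $\{0\}$, $\{1\}$, $\{1,2\}$ and $\sigma=\{(1,1),(2,2)\}$ (this is why the paper proves the stronger statement for the enriched language $\{R,\sigma,\{0\},\{1\},\{1,2\}\}$). After this rewriting either no $R$-constraint survives, in which case the instance lies in a tractable language and Lemma~\ref{SCSPTOCSP} applies, or some $R$ survives, in which case assigning $0$, $1$, $2$ to the variables of the first, second and third type respectively is a surjective solution because $(0,1,1,1,2)\in R$ and $(1,1)\in\sigma$. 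If you want to complete your own route, you would need to prove an analogous dichotomy, namely that in a constant-free instance the value $2$ is attainable unless every $R$-constraint degenerates as above; at present that claim is asserted but not established.
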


\begin{proof}
The idea of the algorithm is very simple. 
To prevent our instance from 
having a trivial surjective solution
we have to restrict the fifth variable of 
every appearance of $R$ to a smaller domain.
After we did this, we can replace $R$ by a conjunction of easier relations.
Thus, we show that our instance has a trivial surjective solution unless all appearances of the relation $R$ could be replaced by easier relations. 

Let 
$\sigma = \{(1,1),(2,2)\}$. 
We will prove a stronger claim that 
$\SCSP(\Gamma)$ is solvable in polynomial time for 
$\Gamma = \{R,\sigma,\{0\},\{1\},\{1,2\}\}$.
Consider an instance $\mathcal I$ of $\SCSP(\Gamma)$.

First, we want to classify every occurrence of a variable in the instance. 
We say that an occurrence is of \emph{the first type} if 
the projection of the constraint onto this variable is 
$\{0\}$,
we say that 
an occurrence is of \emph{the second type} if 
the projection of the constraint onto this variable is 
a subset of $\{1,2\}$.
In all other cases we say that it is an occurrence
of \emph{the third type}.
For example 
in $R(x_1,x_2,x_3,x_4,x_5)$
the variable $x_1$ is of the first type, 
the variables $x_2,x_3,x_4$ are of the second type 
and $x_5$ is of the third type.
In $\sigma$ both variables are of the second type.
Note that the third type appears only in the relation $R$.

Second, we want all the occurrences of 
each variable to be of one type. We do the following:
\begin{itemize}
    \item If a variable occurs in the first and the second types then we return ``No solutions''.    
    \item If a variable occurs in the first and third types then it should appear at the last position of 
    $R$. Then we replace the relation $R$ by $\sigma$, $\{0\}$ and $\{1\}$ using the following equation
    \begin{align*}
R(x_1,x_2,x_3,x_4,x_{5})\wedge  
(x_{5}=0)
=&\\
(x_{1}=0)\wedge \sigma(x_2,x_3)\wedge &(x_{4}=1)\wedge
(x_{5}=0)
    \end{align*}
    \item If a variable occurs in the second and third types then it should appear at the last position of 
    $R$. Then we replace the relation $R$ by $\sigma$, $\{0\}$ and $\{1,2\}$ using the following equation%
\begin{align*}
R(x_1,x_2,x_3,x_4,x_{5})\wedge  
(x_{5}\in \{1,2\})
&=\\
(x_{1}=0)\wedge \sigma(x_2,x_4)\wedge (x_{3}=&1)\wedge 
(x_{5}\in \{1,2\})
\end{align*}
\end{itemize}
Finally we get an instance with the same solution set such that all the occurrences of each variable have the same type. Here we may have two cases: 

Case 1. The instance $\mathcal I$ does not contain $R$ at all. Such an instance is in fact trivial because it contains only equality relation on $\{1,2\}$ and unary relations. Hence, we can check whether it has a surjective solution in polynomial time. To avoid a formal explanation of how to do this, we can reduce such SCSP to CSP.
Since $\CSP(\{\sigma,\{0\},\{1\},\{2\},\{1,2\}\})$ can be solved in polynomial time \cite{BulatovForThree}, 
by Lemma \ref{SCSPTOCSP}
$\SCSP(\{\sigma,\{0\},\{1\},\{1,2\}\})$ is also solvable in polynomial time. 
We use this algorithm for our instance.


Case 2. The instance $\mathcal I$ contains $R$, which means that the instance has occurrences of variables of all three types (since all types appear in $R$).
To get a surjective solution it is sufficient to 
send the variables of the first type to 0, the variables of the second type to 1, and the variables of the third type to 2.
Since $R$ holds on $(0,1,1,1,2)$ and $\sigma$ holds on 
$(1,1)$, this is in fact a solution.
\end{proof}


\bibliographystyle{plain}
\bibliography{refs}

\begin{thebibliography}{10}

\bibitem{bodirsky2004disser}
Manuel Bodirsky.
\newblock {\em Constraint Satisfaction with Infinite Domains}.
\newblock PhD thesis, Humboldt-Universit{\"a}t zu Berlin, 2004.

\bibitem{SCSPSURVEY}
Manuel Bodirsky, Jan K{\'a}ra, and Barnaby Martin.
\newblock The complexity of surjective homomorphism problems - a survey.
\newblock {\em Discrete Applied Mathematics}, 160(12):1680--1690, 2012.

\bibitem{BulatovForThree}
Andrei~A. Bulatov.
\newblock A dichotomy theorem for constraint satisfaction problems on a
  3-element set.
\newblock {\em J. ACM}, 53(1):66--120, January 2006.

\bibitem{BulatovProofCSP}
Andrei~A. Bulatov.
\newblock A dichotomy theorem for nonuniform {CSP}s.
\newblock {\em CoRR}, abs/1703.03021, 2017.

\bibitem{bulatov2017dichotomy}
Andrei~A Bulatov.
\newblock A dichotomy theorem for nonuniform {CSP}s.
\newblock In {\em 2017 IEEE 58th Annual Symposium on Foundations of Computer
  Science (FOCS)}, pages 319--330. IEEE, 2017.

\bibitem{chen2014algebraic}
Hubie Chen.
\newblock An algebraic hardness criterion for surjective constraint
  satisfaction.
\newblock {\em Algebra universalis}, 72(4):393--401, 2014.

\bibitem{HubieSlides}
Hubie Chen.
\newblock Three concrete complexity questions about constraints, in search of
  theories.
\newblock AAA98.Arbeitstagung Allgemeine Algebra - 98-th Workshop on General
  Algebra. Dresden, Germany, 2019.

\bibitem{chen2020algebraic}
Hubie Chen.
\newblock Algebraic global gadgetry for surjective constraint satisfaction.
\newblock {\em arXiv preprint arXiv:2005.11307}, 2020.

\bibitem{cook20102k2}
Kathryn Cook, Simone Dantas, Elaine~M Eschen, Luerbio Faria, Celina~MH
  De~Figueiredo, and Sulamita Klein.
\newblock 2k2 vertex-set partition into nonempty parts.
\newblock {\em Discrete mathematics}, 310(6-7):1259--1264, 2010.

\bibitem{SCSPforTwo1997}
Nadia Creignou and J-J H{\'e}brard.
\newblock On generating all solutions of generalized satisfiability problems.
\newblock {\em RAIRO-Theoretical Informatics and Applications}, 31(6):499--511,
  1997.

\bibitem{SCSPforTwoBook}
Nadia Creignou, Sanjeev Khanna, and Madhu Sudan.
\newblock {\em Complexity classifications of boolean constraint satisfaction
  problems}, volume~7.
\newblock SIAM, 2001.

\bibitem{dantas2005finding}
Simone Dantas, Celina~MH de~Figueiredo, Sylvain Gravier, and Sulamita Klein.
\newblock Finding h-partitions efficiently.
\newblock {\em RAIRO-Theoretical Informatics and Applications}, 39(1):133--144,
  2005.

\bibitem{dantas20122k2}
Simone Dantas, Fr{\'e}D{\'e}Ric Maffray, and Ana Silva.
\newblock 2k2-partition of some classes of graphs.
\newblock {\em Discrete Applied Mathematics}, 160(18):2662--2668, 2012.

\bibitem{diao2000upper}
Kefeng Diao, Ping Zhao, and Huishan Zhou.
\newblock About the upper chromatic number of a co-hypergraph.
\newblock {\em Discrete Mathematics}, 220(1-3):67--73, 2000.

\bibitem{FederVardi}
Tom\'{a}s Feder and Moshe~Y. Vardi.
\newblock The computational structure of monotone monadic snp and constraint
  satisfaction: A study through datalog and group theory.
\newblock {\em SIAM J. Comput.}, 28(1):57--104, February 1999.

\bibitem{fleischner2009covering}
Herbert Fleischner, Egbert Mujuni, Dani{\"e}l Paulusma, and Stefan Szeider.
\newblock Covering graphs with few complete bipartite subgraphs.
\newblock {\em Theoretical Computer Science}, 410(21-23):2045--2053, 2009.

\bibitem{DBLP:journals/siamdm/FockeGZ19}
Jacob Focke, Leslie~Ann Goldberg, and Stanislav Zivny.
\newblock The complexity of counting surjective homomorphisms and compactions.
\newblock {\em {SIAM} J. Discrete Math.}, 33(2):1006--1043, 2019.

\bibitem{DBLP:journals/computability/GolovachJMPS19}
Petr~A. Golovach, Matthew Johnson, Barnaby Martin, Dani{\"{e}}l Paulusma, and
  Anthony Stewart.
\newblock Surjective h-colouring: New hardness results.
\newblock {\em Computability}, 8(1):27--42, 2019.

\bibitem{DBLP:journals/tcs/GolovachPS12}
Petr~A. Golovach, Dani{\"{e}}l Paulusma, and Jian Song.
\newblock Computing vertex-surjective homomorphisms to partially reflexive
  trees.
\newblock {\em Theor. Comput. Sci.}, 457:86--100, 2012.

\bibitem{ito2011disconnected}
Takehiro Ito, Marcin Kami{\'n}ski, Dani{\"e}l Paulusma, and Dimitrios~M
  Thilikos.
\newblock On disconnected cuts and separators.
\newblock {\em Discrete applied mathematics}, 159(13):1345--1351, 2011.

\bibitem{ito2011parameterizing}
Takehiro Ito, Marcin Kami{\'n}ski, Dani{\"e}l Paulusma, and Dimitrios~M
  Thilikos.
\newblock Parameterizing cut sets in a graph by the number of their components.
\newblock {\em Theoretical computer science}, 412(45):6340--6350, 2011.

\bibitem{ColoringMixedHypertrees}
Daniel Kr{\'a}l, Jan Kratochv\'{\i}l, Andrzej Proskurowski, and
  Heinz-J{\"u}rgen Voss.
\newblock Coloring mixed hypertrees.
\newblock {\em Discrete Applied Mathematics}, 154(4):660--672, 2006.

\bibitem{DBLP:journals/toct/LaroseMP19}
Beno{\^{\i}}t Larose, Barnaby Martin, and Dani{\"{e}}l Paulusma.
\newblock Surjective h-colouring over reflexive digraphs.
\newblock {\em {TOCT}}, 11(1):3:1--3:21, 2019.

\bibitem{lau}
D.~Lau.
\newblock {\em Function algebras on finite sets}.
\newblock Springer, 2006.

\bibitem{DBLP:journals/jct/MartinP15}
Barnaby Martin and Dani{\"{e}}l Paulusma.
\newblock The computational complexity of disconnected cut and
  2\({}_{\mbox{k2}}\)-partition.
\newblock {\em J. Comb. Theory, Ser. {B}}, 111:17--37, 2015.

\bibitem{Schaefer}
Thomas~J. Schaefer.
\newblock The complexity of satisfiability problems.
\newblock In {\em Proceedings of the Tenth Annual ACM Symposium on Theory of
  Computing}, STOC '78, pages 216--226, New York, NY, USA, 1978. ACM.

\bibitem{vikas1999computational}
Narayan Vikas.
\newblock Computational complexity of compaction to cycles.
\newblock In {\em Proceedings of the tenth annual ACM-SIAM symposium on
  Discrete algorithms}, pages 977--978, 1999.

\bibitem{VikasReflexive}
Narayan Vikas.
\newblock Computational complexity of compaction to reflexive cycles.
\newblock {\em SIAM J. Comput.}, 32(1):253--280, 2002.

\bibitem{VikasIrreflexive}
Narayan Vikas.
\newblock Computational complexity of compaction to irreflexive cycles.
\newblock {\em J. Comput. Syst. Sci.}, 68(3):473--496, 2004.

\bibitem{VikasOther}
Narayan Vikas.
\newblock A complete and equal computational complexity classification of
  compaction and retraction to all graphs with at most four vertices and some
  general results.
\newblock {\em J. Comput. Syst. Sci.}, 71(4):406--439, 2005.

\bibitem{DBLP:journals/algorithmica/Vikas13}
Narayan Vikas.
\newblock Algorithms for partition of some class of graphs under compaction and
  vertex-compaction.
\newblock {\em Algorithmica}, 67(2):180--206, 2013.

\bibitem{DBLP:conf/mfcs/Vikas17}
Narayan Vikas.
\newblock Computational complexity of graph partition under vertex-compaction
  to an irreflexive hexagon.
\newblock In Kim~G. Larsen, Hans~L. Bodlaender, and Jean{-}Fran{\c{c}}ois
  Raskin, editors, {\em 42nd International Symposium on Mathematical
  Foundations of Computer Science, {MFCS} 2017, August 21-25, 2017 - Aalborg,
  Denmark}, volume~83 of {\em LIPIcs}, pages 69:1--69:14. Schloss Dagstuhl -
  Leibniz-Zentrum f{\"{u}}r Informatik, 2017.

\bibitem{DBLP:journals/jda/Vikas18}
Narayan Vikas.
\newblock Computational complexity relationship between compaction,
  vertex-compaction, and retraction.
\newblock {\em J. Discrete Algorithms}, 52-53:168--181, 2018.

\bibitem{voloshin1993mixed}
VI~Voloshin.
\newblock The mixed hypergraphs.
\newblock {\em Computer Science Journal of Moldova}, 1(1):1, 1993.

\bibitem{ZhukFOCSCSPPaper}
Dmitriy Zhuk.
\newblock A proof of {CSP} dichotomy conjecture.
\newblock In {\em 58th {IEEE} Annual Symposium on Foundations of Computer
  Science, {FOCS} 2017, Berkeley, CA, USA, October 15-17, 2017}, pages
  331--342, 2017.

\bibitem{MyProofCSP}
Dmitriy Zhuk.
\newblock A proof of the {CSP} dichotomy conjecture.
\newblock {\em Journal of the ACM (JACM)}, 67(5):1--78, 2020.

\end{thebibliography}

\end{document}